\definecolor{navy}{rgb}{0,0,0.502}
\definecolor{brown}{rgb}{0.59, 0.29, 0.0}
\def\thick#1{\hbox{\rlap{$#1$}\kern0.25pt\rlap{$#1$}\kern0.25pt$#1$}}
\def\bY{{\bf Y}}
\def\pone{\text{C1}}
\def\ptwo{\text{C2}}
\def\rone{\text{R1}}
\def\rtwo{\text{R2}}
\def\rthree{\text{R3}}
\def\simp{\mathcal{S}_{\ds}}
\def\ds{{d-1}}
\def\balpha{{\boldsymbol\alpha}}
\def\bbeta{{\boldsymbol\beta}}
\def\bgamma{{\thick\gamma}}
\def\bzero{{\bf 0}}
\def\indic{\mathds{1}}
\def\real{{\mathbb R}}
\def\expect{{\mathbb E}}
\def\prob{{\mathbb P}}
\def\nat{{\mathbb N}}
\def\copula{\widehat{\mathbb{D}}}
\def\spA{\mathcal{A}}
\def\spC{\mathcal{C}}
\def\spX{\mathcal{X}}
\def\conAS{\stackrel{\mathrm{a.s.}}{\longrightarrow}}
\def\conD{\Rightarrow}
\def\conW{\leadsto}
\def\CB{\mathbb{D}}
\def\MD{\mathrm{MD}}
\newcommand{\norm}[1]{\lVert{#1}\rVert}
\newcommand{\abs}[1]{\lvert{#1}\rvert}
\newcommand{\Cov}{\operatorname{Cov}}
\newcommand{\diff}{\mathrm{d}}
\newcommand{\bb}{{\boldsymbol{b}}}
\newcommand{\be}{{\boldsymbol{e}}}
\newcommand{\br}{{\boldsymbol{r}}}
\newcommand{\bu}{{\boldsymbol{u}}}
\newcommand{\bv}{{\boldsymbol{v}}}
\newcommand{\bw}{{\boldsymbol{w}}}
\newcommand{\bx}{{\boldsymbol{x}}}
\newcommand{\bz}{{\boldsymbol{z}}}
\newcommand{\bR}{{\boldsymbol{R}}}
\newcommand{\bX}{{\boldsymbol{X}}}
\newcommand{\bU}{{\boldsymbol{U}}}
\newcommand{\argmin}{\operatornamewithlimits{\arg\min}}
\numberwithin{equation}{section}
\theoremstyle{definition}
\newtheorem{defi}{Definition}[section]
\newtheorem{cond}[defi]{Condition}
\theoremstyle{plain}
\newtheorem{theo}[defi]{Theorem}
\newtheorem{prop}[defi]{Proposition}
\newtheorem{lem}[defi]{Lemma}
\theoremstyle{remark}
\newtheorem{rem}[defi]{Remark}
\newcommand{\brown}[1]{\bgroup\color{brown}{#1}\egroup}
\begin{document}

\title{Multivariate Nonparametric Estimation of the Pickands
Dependence Function using Bernstein Polynomials}
\author{
G. Marcon, S. A. Padoan, P. Naveau, P. Muliere and J. Segers
\footnote{Marcon is Post-doc at university of Pavia, Italy, 
E-mail: giulia.marcon@phd.unibocconi.it.
Muliere and Padoan work at the
Department of Decision Sciences,
Bocconi University of Milan, via Roentgen 1, 20136 Milano, Italy.
E-mail: pietro.muliere@unibocconi.it,
simone.padoan@unibocconi.it. Naveau is a CNRS researcher at the
Laboratoire des Sciences du Climat et l'Environnement, Gif-sur-Yvette, France.
E-mail: naveau@lsce.ipsl.fr. Segers is a Professor at the Universit\'{e}
catholique de Louvain, Institut de statistique, biostatistique et sciences actuarielles, Voie du
Roman Pays 20, B-1348 Louvain-la-Neuve, Belgium. E-mail: Johan.Segers@uclouvain.be.}
}

\maketitle

\begin{abstract}
Many applications in risk analysis, especially in environmental sciences,  require the estimation of the dependence among multivariate maxima.
A way to do this is by inferring the Pickands dependence function of the underlying extreme-value copula.
A nonparametric estimator is constructed as the sample equivalent of a multivariate extension of the madogram.
Shape constraints on the family of Pickands dependence functions are taken into account by means of a representation in terms of a specific type of Bernstein polynomials. 
The large-sample theory of the estimator is developed and its finite-sample performance is evaluated with a simulation study.
The approach is illustrated by analyzing  clusters consisting of  seven weather stations that have recorded weekly maxima of hourly rainfall in France from 1993 to 2011.
\\

\noindent Keywords: Bernstein polynomials, Extremal dependence, Extreme-value copula, Heavy rainfall, 
Nonparametric estimation, Multivariate max-stable distribution, Pickands dependence function.
\end{abstract}

\section{Introduction and background}\label{sec:intro}

In recent years,  inference methods for assessing the extremal dependence have been in increasing demand. This is especially due to growing requests for multivariate analyses of extreme values in the fields of environmental and economic sciences. The dimension of the random vector under study is often greater than two. For example, 
Figure \ref{fig:map} displays a map of clusters containing seven weather stations in France each; see \citeN{bernard13} for details on the construction of the clusters.
\begin{figure}[h]
\centering
\includegraphics[width=0.50\textwidth]{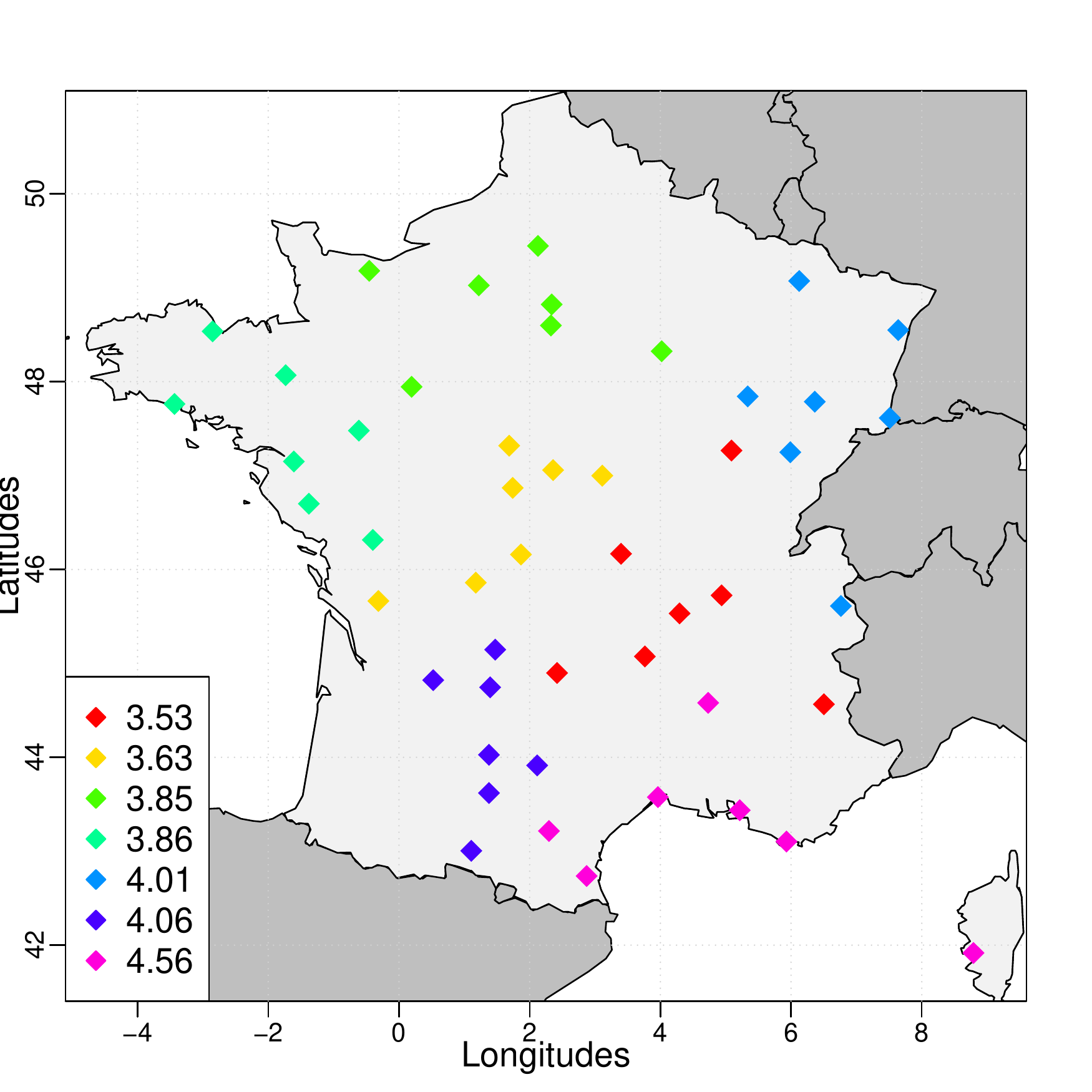}
\caption{
Analysis of French weekly precipitation maxima in the period 1993--2011. 
Clusters  of 49 weather stations   and 
their  estimated extremal coefficients  in dimension $d=7$ obtained with the projected version of the madogram estimator, see Section \ref{sec:data} for details.
}
\label{fig:map}
\end{figure}
The data consist of weekly maxima of hourly rainfall recorded at each station\footnote{Data provided by M\'et\'eo--France and published within the \textsf{R} package \textsf{ClusterMax}, freely available from the homepage of Philippe Naveau, \url{http://www.lsce.ipsl.fr/Pisp/philippe.naveau/}.}. It would be of interest to hydrologists to infer the dependence within each of the seven-dimensional vectors of component-wise maxima and to compare the dependence structures among clusters. Such an endeavor represents the main motivation of this work. 

Let $\bX=(X_1,\ldots,X_d)$ be a $d$-dimensional random vector of maxima 
that follows a multivariate max-stable distribution $G$; for more background on univariate and multivariate extreme-value theory, see for instance \shortciteN{beirlant+g+s+t04}, \citeN{dehaan+f06}, or \citeN{falk+h+r10}. The margins of $G$,
denoted by $F_i(x)=\prob\{X_i\leq x\}$, for all $x\in \real$ and $i=1,\ldots,d$, are univariate max-stable distributions. The joint distribution takes the form
%
%
\begin{equation}
\label{eq:G(x)}
  G(\bx) = C \bigl( F_1(x_1), \ldots, F_d(x_d) \bigr), \qquad \bx \in \real^d,
\end{equation}
where $C$ is an extreme-value copula:
\begin{equation}
\label{eq:ev_copula}
  C(u_1, \ldots, u_d) = \exp \bigl( - \ell( - \log u_1, \ldots, - \log u_d) \bigr), 
  \qquad \bu \in (0, 1]^d,
\end{equation}
with $\ell : [0, \infty)^d \to [0, \infty)$ the so-called stable tail dependence function. The latter function is homogeneous of order one and is therefore determined by its restriction on the unit simplex, the restriction itself being called the Pickands dependence function, denoted here by $A$. Formally, we have
\begin{equation}
\label{eq:ellA}
  \ell( \bz ) = (z_1 + \cdots + z_d) \, A( \bw ), \qquad \bz \in [0, \infty)^d,
\end{equation}
where $w_i = z_i / (z_1 + \cdots + z_d)$ for $i = 1, \ldots, d-1$ and $w_d = 1 - w_1 - \cdots - w_{d-1}$. We view $A$ as a function defined on the $(d-1)$-dimensional unit simplex
\begin{equation}
\label{eq:simplex}
\simp := \left\{ (w_1,\ldots, w_{ d-1}) \in [0,1]^{ d-1}: \sum_{i=1}^{ d-1} w_i \leq 1 \right\}.
\end{equation}

Let $\spA$ be the family of functions $A: \simp \rightarrow [1/d,1]$ that satisfy the following conditions:
\begin{enumerate}
\item[(C1)] $A(\bw)$ is convex, i.e., $A(a\bw_1+(1-a)\bw_2)\leq aA(\bw_1)+(1-a)A(\bw_2)$,
for $a\in[0,1]$ and $\bw_1,\bw_2\in \simp$;
\item[(C2)]  $A(\bw)$ has lower and upper bounds
$$
1/d\leq \max\left(w_1,\ldots,w_{ d-1},w_d \right) \leq A(\bw) \leq 1,
$$
for any $\bw = (w_1, \ldots, w_{ d-1}) \in \simp$ with $w_d=1-w_1-\ldots-w_{d-1}$;  
\end{enumerate}
Any Pickands dependence function belongs to the class $\spA$ (\citeANP{falk+h+r10},  \citeyearNP{falk+h+r10}, Ch.\ 4). The converse is not true, however; see \shortciteANP{beirlant+g+s+t04} (\citeyearNP{beirlant+g+s+t04}, p.\ 257) for a counterexample. A characterization of the class of stable tail dependence functions has been given in \citeN{ressel2013}. 
In condition (C2),  the lower and upper bounds represent  the cases of complete dependence and independence, respectively.
%

Many parametric models have been introduced for modelling 
the extremal dependence for a variety of applications, with summaries to be found in \citeN{kotz2000}  and \citeN{padoan13}.
However, such finite-dimensional parametric models can never cover the full class of Pickands dependence functions.
For this reason, several nonparametric estimators of the Pickands dependence function have been proposed: see for instance \citeN{pickands81}, \shortciteN{cap+f+g97}, \citeN{hall+t00}, \shortciteN{zhang+w+p08},
\citeN{genest2009rank}, \shortciteN{bucher2011},
\citeANP{gudend+s11} (\citeyearNP{gudend+s11}, \citeyearNP{gudend+s12}), and \shortciteN{berghaus2013}. All of these estimators require further adjustments to ensure they are genuine Pickands dependence functions.
%
 
Given an independent random sample from a multivariate distribution with continuous margins and whose copula is an extreme-value copula, we propose a nonparametric estimator for its Pickands dependence function. In the bivariate case, a fast-to-compute and easy-to-interpret estimator based on a type of  
 madogram was introduced by
\shortciteN{naveau+g+c+d}. It has two drawbacks, however: it was only defined for the bivariate case and it is not necessarily a Pickands dependence function itself. 
Our first contribution is to propose a new type of madogram in the multivariate setting, see also \shortciteN{Fonseca13}. 
A second contribution is to regularise the estimator by projecting it onto the space $\spA$, imposing the necessary constraints (C1)--(C2). To do so, we make use of Bernstein polynomials. We admit that the resulting estimator still need not be a Pickands dependence function. Still, simulation results show that imposing (C1)--(C2) already greatly improves the estimation accuracy.


Many regularization strategies have already been considered in the literature. In the bivariate case, \citeN{pickands81} suggested the use of the greatest convex minorant.
\shortciteN{smith+t+y90} proposed to modify a pilote estimator using kernel methods, while \citeN{hall+t00} advocated constrained smoothing splines. 
However, as discussed in \shortciteN{fil+g+s08}, the impact of  these adjustments   on the asymptotic properties of the estimator changes from one case to another, while a general result is unknown. 
The projection estimator approach developed in \shortciteN{fil+g+s08} and \citeN{gudend+s12} provides a general framework based on projections of a pilote estimate onto an increasing sequence of finite-dimensional subsets $\spA_k\subseteq \spA$. The approximation space they proposed consists of piecewise linear functions, yielding computational challenges in higher dimensions.

To bypass these computational hurdles, our strategy is to replace piecewise linear functions by Bernstein polynomials (\citeANP{lorentz53}, \citeyearNP{lorentz53}; \citeANP{sauer91}, \citeyearNP{sauer91}).
In virtue of their optimal shape restriction properties (\citeANP{carnicer+p93}, \citeyearNP{carnicer+p93}),  
Bernstein polynomials are suitable for nonparametric curve estimation (e.g. \citeNP{petrone99}; 
\shortciteNP{chang+h+w+y05}) and shape-preserving regression \cite{wang+g12}.
We provide the asymptotic theory for our estimator and we demonstrate its practical use in dimension seven, which seems to be higher than what has been possible hitherto with nonparametric methods.
The estimation uncertainty can be assessed through a resampling procedure.

Throughout  the paper we use the following notation. Given $\spX\subset \real^n$ and  $n\in \nat$, let 
$\ell^{\infty}(\spX)$ denote the spaces of 
bounded real-valued functions on $\spX$.
For $f:\spX \rightarrow \real$, let $\|f\|_{\infty}=\sup_{\bx \in \spX} |f(\bx)|$. The arrows ``$\conAS$'', ``$\conD$'', and ``$\conW$''  denote almost sure convergence, convergence in distribution of random vectors (see \citeNP{vaart98}, Ch.\ 2) and weak convergence of functions in $\ell^{\infty}(\spX)$ (see \citeNP{vaart98}, Ch.\ 18--19), respectively. Let $L^2(\spX)$ denote the Hilbert space of square-integrable functions $f : \spX \to \real$, with $\spX$ equipped with $n$-dimensional Lebesgue measure; the $L^2$-norm is denoted by $\|f\|_2=(\int_{\spX} f^2(\bx) \, \diff\bx)^{1/2}$. 
For analytical reasons, we view the unit simplex $\simp$ as a subset of $\mathbb{R}^{d-1}$, see \eqref{eq:simplex}, although geometrically, it is perhaps more natural to consider it as a subset of $\mathbb{R}^d$. A similar convention applies to our use of the multi-index $\balpha$ in Section~\ref{sec:bern}.

The paper is organised as follows. 
In Section~\ref{sec:nonparest}, we introduce our multivariate nonparametric madogram estimator and we discuss its properties.
In Section~\ref{sec:bern},
we describe the projection method based on the Bernstein polynomials. 
In Section~\ref{sec:num}, we investigate the finite-sample performance of our estimation method by means of Monte Carlo simulations.
Finally, we apply our approach to French weekly maxima of hourly rainfall in Section~\ref{sec:data}. All proofs are deferred to the appendices.

\section{Madogram estimator}
\label{sec:nonparest}
%
Let $\bX$ be a random vector with continuous marginal distribution functions $F_1, \ldots, F_d$ and whose copula $C$ is an extreme-value copula with stable tail dependence function $\ell$ and Pickands dependence function $A$; see above.
Our estimator is based on the sample version of the multivariate madogram, extending \shortciteN{naveau+g+c+d}, see also \shortciteN{Fonseca13}.
\begin{defi}\label{def:multimado}
For $\bw \in \simp$, the multivariate $\bw$-madogram, denoted by $\nu(\bw)$, 
is defined as the expected distance between the componentwise maximum and the componentwise mean of the 
variables $F^{1/w_1}_{1}(X_{1}), \dots, F^{1/w_d}_{d}(X_{d})$, that is,
\begin{equation}\label{eq:multimd}
\nu(\bw) =
\expect \left[
\bigvee_{i=1}^d\left \lbrace F^{1/w_i}_{i}\left(X_{i}\right)  \right\rbrace -
\frac{1}{d}\sum_{i=1}^dF^{1/w_i}_{i}\left(X_{i}\right)
\right].
\end{equation}
For $w_i = 0$ and $0 < u < 1$, we put $u^{1/w_i} = 0$ by convention.
%
\end{defi}
%

\begin{prop}\label{prop:multimado}
If the random vector $\bX$ has continuous margins and extreme-value copula with Pickands dependence function $A$, then, for all $\bw \in \simp$,
\begin{eqnarray}
\nonumber
  \nu(\bw) &=& \frac{A(\bw)}{1 + A(\bw)} - c(\bw), \\
\label{eq:Amd}
  A(\bw) &=& \frac{\nu(\bw) + c(\bw)}{1 - \nu(\bw) - c(\bw)},
\end{eqnarray}
where $c(\bw) = d^{-1} \sum_{i=1}^d w_i / (1 + w_i)$.
%
\end{prop}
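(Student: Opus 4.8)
The plan is to push everything through the probability integral transforms $U_i := F_i(X_i)$, $i = 1, \ldots, d$. Since the margins are continuous, each $U_i$ is uniformly distributed on $(0,1)$ and the joint distribution function of $(U_1, \ldots, U_d)$ is precisely the extreme-value copula $C$ in \eqref{eq:ev_copula}. With the convention $u^{1/w_i} = 0$ when $w_i = 0$, write $M(\bw) := \bigvee_{i=1}^d U_i^{1/w_i}$; then \eqref{eq:multimd} splits as
\[
  \nu(\bw) = \expect[ M(\bw) ] - \frac{1}{d} \sum_{i=1}^d \expect\bigl[ U_i^{1/w_i} \bigr].
\]

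First I would handle the ``mean'' part: one-dimensional calculus gives $\expect[ U_i^{1/w_i} ] = \int_0^1 u^{1/w_i} \, \diff u = w_i/(1+w_i)$ when $w_i > 0$, while the convention forces $\expect[ U_i^{1/w_i} ] = 0 = w_i/(1+w_i)$ when $w_i = 0$; hence the second term equals $c(\bw)$ for every $\bw \in \simp$. For the ``max'' part, I would first identify the distribution function of $M(\bw)$: for $t \in (0,1)$,
\[
  \prob\{ M(\bw) \le t \} = \prob\{ U_i \le t^{w_i}, \; i = 1, \ldots, d \} = C\bigl( t^{w_1}, \ldots, t^{w_d} \bigr),
\]
the $i$-th constraint being vacuous when $w_i = 0$ (where $t^{w_i} = 1$). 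Substituting into \eqref{eq:ev_copula}, using $- \log t^{w_i} = - w_i \log t$ and the homogeneity relation \eqref{eq:ellA} — and noting that on $\simp$ one has $w_1 + \cdots + w_d = 1$, so that the normalised argument of $A$ is exactly $\bw$ — yields $C(t^{w_1}, \ldots, t^{w_d}) = \exp\{ \log t \cdot A(\bw) \} = t^{A(\bw)}$. Since $M(\bw)$ takes values in $[0,1]$, the layer-cake formula then gives
\[
  \expect[ M(\bw) ] = \int_0^1 \prob\{ M(\bw) > t \} \, \diff t = \int_0^1 \bigl( 1 - t^{A(\bw)} \bigr) \, \diff t = \frac{A(\bw)}{1 + A(\bw)},
\]
the integral being finite because $A(\bw) \ge 1/d > 0$.

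Combining the two pieces gives the first identity $\nu(\bw) = A(\bw)/(1+A(\bw)) - c(\bw)$. The second identity is then obtained by solving $\nu(\bw) + c(\bw) = A(\bw)/(1 + A(\bw))$ for $A(\bw)$; this is elementary, the only thing to check being that $\nu(\bw) + c(\bw) = A(\bw)/(1+A(\bw)) \in [0,1)$, so that division by $1 - \nu(\bw) - c(\bw)$ is legitimate. I expect the only point requiring genuine care to be consistent bookkeeping of the boundary convention $w_i = 0$ in both the mean and the max terms, together with making sure the homogeneity of $\ell$ is invoked on the simplex where the weights sum to one; the rest is routine.
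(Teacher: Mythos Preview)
Your proposal is correct and follows essentially the same approach as the paper: reduce to the copula via $U_i = F_i(X_i)$, use the key identity $C(t^{w_1},\ldots,t^{w_d}) = t^{A(\bw)}$, and compute the two expectations separately. The paper packages the layer-cake/Fubini step into a preliminary lemma (writing $\nu_{\bw}$ via indicators and integrating against a generic $H$), while you apply it directly to $M(\bw)$ and to each $U_i^{1/w_i}$; the mathematical content is the same.
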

%

The madogram can be interpreted as the $L_1$ distance between the maximum and the average of the random variables $F_1^{1/w_1}(X_1), \ldots, F_d^{1/w_d}(X_d)$. If $w_1 = \ldots = w_d = 1/d$, then the $L_1$ distance is zero if and only if all components $F_i(X_i)$ are equal with probability one, that is, in case of complete dependence.

In the bivariate case, Definition~\ref{def:multimado} is slightly different from the one proposed by \shortciteN{naveau+g+c+d}. 
Here, we use the vector  $\big(F^{1/w_1}_1\left(X_1\right),F^{1/w_2}_2\left(X_2\right)\big)$ instead of 
$\big(F^{w_1}_1(X_1),F^{w_2}_2(X_2)\big)$. 
This new version has the advantage that the sample equivalent of \eqref{eq:Amd} will automatically satisfy condition~(C2).

Assume first that the marginal distributions $F_1, \ldots, F_d$ are known; below, we will estimate them by the empirical distribution functions. Equation~(\ref{eq:multimd}) suggests the statistic 
%
\begin{equation}\label{eq:empmado}
\nu_n(\bw)=\frac{1}{n}\sum_{m=1}^n 
\left(
\bigvee_{i=1}^d\left\lbrace F^{1/w_i}_{i}\left(X_{m,i}\right)\right\rbrace -
\frac{1}{d}\sum_{i=1}^d F^{1/w_i}_{i}\left(X_{m,i}\right)
\right).
\end{equation}
The Pickands dependence function can then be estimated through 
\begin{equation}\label{eq:mado}
A_n^{\MD}(\bw)=\frac{\nu_n(\bw)+c(\bw)}{1-\nu_n(\bw)-c(\bw)}, \qquad \bw\in \simp.
\end{equation}
Next, we estimate the unknown marginal distributions $F_1,\ldots,F_d$ by the empirical distribution functions
\begin{equation}\label{eq:empirical}
F_{n,i}(x)=\frac{1}{n}\sum_{m=1}^n \indic(X_{m,i}\leq x),\qquad i =1,\ldots,d,
\end{equation}
where $\indic(E)$ is the indicator function of the event $E$. Replacing $F_i$ by $F_{n,i}$ in Equation~\eqref{eq:empmado} yields our nonparametric estimators $\widehat{\nu}_n$ and $\widehat{A}_n^{\MD}$ of the multivariate madogram and of the Pickands dependence function, respectively: 
\begin{align*}
  \widehat{\nu}_n(\bw) 
  &= \frac{1}{n}\sum_{m=1}^n 
  \left(
    \bigvee_{i=1}^d \left\lbrace F_{n,i}^{1/w_i}(X_{m,i}) \right\rbrace -
    \frac{1}{d}\sum_{i=1}^d F_{n,i}^{1/w_i}(X_{m,i})
  \right), \\
  \widehat{A}_n^{\MD}(\bw)
  &= \frac{\widehat{\nu}_n(\bw) + c(\bw)}{1 - \widehat{\nu}_n(\bw) - c(\bw)}.
\end{align*}
Other estimators of the margins could be inserted as well. However, the use of the empirical distribution functions requires minimal assumptions and yields an estimator for $A$ which is invariant under monotone transformations.

%

The next theorem summarizes the asymptotic properties related to $A_n^{\MD}$ and $\widehat{A}_n^{\MD}$. 
The asymptotic normality requires a smoothness condition on the extreme-value copula $C$, 
see Example~5.3 in \citeN{seger12}.
\begin{cond}\label{cond:smooth}
For every $i\in\{1,\ldots,d\}$, the partial derivative of $C$ with respect to $u_i$ exists and is continuous on the set $\{\bu\in[0,1]^d: 0< u_i<1\}$.
\end{cond}
Let $\CB$ be a $C$-Brownian bridge, that is, a zero-mean Gaussian process on $[0,1]^d$ with continuous sample paths and with covariance function given by
\begin{equation}\label{eq:covariance}
\Cov(\CB(\bu),\CB(\bv))=C(\bu\wedge\bv)-C(\bu) \, C(\bv),\qquad \bu,\bv\in[0,1]^d,
\end{equation}
where the minimum is considered componentwise. Further, provided Condition~\ref{cond:smooth} is satisfied, define the Gaussian process $\copula$ on $[0, 1]^d$ by
\begin{equation}\label{eq:cop_proc}
\copula(\bu)=\CB(\bu)-\sum_{i=1}^d \frac{\partial C}{\partial u_i}(\bu) \, \CB(1,\ldots,1,u_i,1,\ldots,1),\quad \bu\in[0,1]^d.
\end{equation}
\begin{theo}\label{prop:prop_multimado}
Let $\bX_1, \ldots, \bX_n$ be independent and identically distributed random vectors whose common distribution has continuous margins and extreme-value copula $C$ with Pickands dependence function $A$. Then:
\begin{itemize}
\item[a)]
$
\norm{A_n^{\MD} - A }_\infty \conAS 0$ as $n \to \infty
$
and in $\ell^{\infty}(\simp)$, as $n \to \infty$,
\begin{multline*}
\sqrt{n}(A_n^{\MD}-A)\conW \\
\left((1+A(\bw))^2\frac{1}{d}\sum_{i=1}^d \int_0^1 \bigl( \CB(1,\ldots,1,x^{w_i},1,\ldots,1)-\CB(x^{w_1},\ldots,x^{w_d}) \bigr) \, \diff x\right)_{\bw \in \simp};
\end{multline*}
\item[b)]
$
\norm{\widehat{A}_n^{\MD} - A }_\infty \conAS 0$ as $n \to \infty.
$
Moreover, if Condition~\ref{cond:smooth} is satisfied, then, in $\ell^{\infty}(\simp)$, as $n \to \infty$,
\begin{equation*}\label{eq:wc_pick}
\sqrt{n}(\widehat{A}_n^{\MD}-A)\conW \left(-(1+A(\bw))^2\int_0^1 \copula(x^{w_1},\ldots,x^{w_d}) \, \diff x\right)_{\bw \in \simp}.
\end{equation*}
%
\end{itemize}
\end{theo}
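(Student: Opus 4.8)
The plan is to reduce the whole statement to the observation that, for each $\bw\in\simp$, the $\bw$-madogram is \emph{one and the same bounded linear functional} of the underlying $d$-variate distribution function, followed by a fixed smooth pointwise transformation; parts (a) and (b) will then differ only in which empirical process is fed into that functional. Concretely, using that $\bigvee_{i=1}^d u_i^{1/w_i}\le t$ if and only if $u_i\le t^{w_i}$ for every $i$ (for $t\in(0,1)$, with the convention of Definition~\ref{def:multimado} matched by $t^0=1$) together with the layer-cake formula, one writes $\nu(\bw)=\Phi_\bw(C)$, where for $H\in\ell^\infty([0,1]^d)$
\begin{equation*}
\Phi_\bw(H):=\int_0^1\Bigl(\tfrac1d\sum_{i=1}^d H(1,\ldots,1,t^{w_i},1,\ldots,1)-H(t^{w_1},\ldots,t^{w_d})\Bigr)\,\diff t ;
\end{equation*}
this matches Proposition~\ref{prop:multimado} via $\int_0^1 t^{w_i}\,\diff t=1/(1+w_i)$ and $C(t^{w_1},\ldots,t^{w_d})=t^{A(\bw)}$. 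The same computation yields $\nu_n(\bw)=\Phi_\bw(C_n)$ with $C_n$ the empirical distribution function of $\bigl(F_1(X_{m,1}),\ldots,F_d(X_{m,d})\bigr)_{m=1}^n$, and $\widehat\nu_n(\bw)=\Phi_\bw(C_n^\ast)+R_n(\bw)$ with $C_n^\ast$ the empirical copula built from the pseudo-observations $\bigl(F_{n,1}(X_{m,1}),\ldots,F_{n,d}(X_{m,d})\bigr)_{m=1}^n$ and $\sup_{\bw\in\simp}\abs{R_n(\bw)}\le 2/n$, the remainder being supported on the boundary of $\simp$ and caused by the largest pseudo-observation in each coordinate being exactly $1$. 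Note that $\Phi:=(\Phi_\bw)_{\bw\in\simp}$ is a bounded linear operator from $\ell^\infty([0,1]^d)$ to $\ell^\infty(\simp)$ with $\norm{\Phi(H)}_\infty\le 2\norm{H}_\infty$, and that dominated convergence makes it map continuous functions to functions that are continuous on all of $\simp$ (including its boundary).

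For the consistency statements I would invoke Glivenko--Cantelli, $\norm{C_n-C}_\infty\conAS0$, together with its analogue $\norm{C_n^\ast-C}_\infty\conAS0$ for the empirical copula; by $\norm{\Phi(H)}_\infty\le 2\norm{H}_\infty$ this transfers to $\norm{\nu_n-\nu}_\infty\conAS0$ and $\norm{\widehat\nu_n-\nu}_\infty\conAS0$, and since $\nu(\bw)+c(\bw)=A(\bw)/\bigl(1+A(\bw)\bigr)\in[1/(d+1),1/2]$ the map $g(s):=s/(1-s)$ is Lipschitz on a neighbourhood of the range of $\nu+c$, so that $A_n^{\MD}=g\circ(\nu_n+c)$ and $\widehat A_n^{\MD}=g\circ(\widehat\nu_n+c)$ (well defined for $n$ large) converge uniformly and almost surely to $g\circ(\nu+c)=A$. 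For the weak convergence I would use the classical multivariate Donsker theorem, $\sqrt n(C_n-C)\conW\CB$ in $\ell^\infty([0,1]^d)$, and, under Condition~\ref{cond:smooth}, the weak convergence of the empirical copula process, $\sqrt n(C_n^\ast-C)\conW\copula$ (see \citeN{seger12}, Example~5.3); applying the bounded linear (hence continuous) map $\Phi$ and the continuous mapping theorem, and using $\sqrt n\,\sup_{\bw}\abs{R_n(\bw)}\to 0$, gives $\sqrt n(\nu_n-\nu)\conW\Phi(\CB)$ and $\sqrt n(\widehat\nu_n-\nu)\conW\Phi(\copula)$ in $\ell^\infty(\simp)$, both limits being tight with continuous paths. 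Here one also records that \eqref{eq:cop_proc} and $\CB(\bone)=0$ force $\copula$ to vanish on every face $\{\bu:u_j=1 \text{ for } j\neq i\}$, so that $\Phi_\bw(\copula)=-\int_0^1\copula(x^{w_1},\ldots,x^{w_d})\,\diff x$.

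It remains to apply the functional delta method to the map $\mu\mapsto\bigl(g(\mu(\bw)+c(\bw))\bigr)_{\bw\in\simp}$ from $\ell^\infty(\simp)$ to itself, which is Hadamard differentiable at $\mu=\nu$ because $g$ is real-analytic on a neighbourhood of the compact range of $\nu+c$; a mean-value argument gives derivative $h\mapsto\bigl(g'(\nu(\bw)+c(\bw))\,h(\bw)\bigr)_{\bw}=\bigl((1+A(\bw))^2\,h(\bw)\bigr)_{\bw}$, using $g'(s)=(1-s)^{-2}$ and $1-\nu(\bw)-c(\bw)=(1+A(\bw))^{-1}$. Combining this with the weak limits of the previous paragraph (\citeNP{vaart98}, Theorem~20.8) yields
\begin{gather*}
\sqrt n\bigl(A_n^{\MD}-A\bigr)\conW\bigl((1+A(\bw))^2\,\Phi_\bw(\CB)\bigr)_{\bw\in\simp},\\
\sqrt n\bigl(\widehat A_n^{\MD}-A\bigr)\conW\bigl((1+A(\bw))^2\,\Phi_\bw(\copula)\bigr)_{\bw\in\simp},
\end{gather*}
and writing out $\Phi_\bw(\CB)$ and $\Phi_\bw(\copula)$ — using $\tfrac1d\sum_{i=1}^d\int_0^1\CB(x^{w_1},\ldots,x^{w_d})\,\diff x=\int_0^1\CB(x^{w_1},\ldots,x^{w_d})\,\diff x$ to recover the form in the statement — reproduces exactly the two limit processes. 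I expect the one genuinely delicate ingredient to be the weak convergence of the empirical copula process under the minimal smoothness Condition~\ref{cond:smooth}, together with the uniform control of $\Phi$ and of the remainder $R_n$ near the boundary of $\simp$; since the former is available from the literature in the cited form, everything else should be routine bookkeeping.
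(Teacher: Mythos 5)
Your proposal is correct and follows essentially the same route as the paper: the layer-cake representation of the madogram as a bounded linear functional $\Phi_\bw$ of the (empirical) copula, strong consistency via $\norm{\Phi(H)}_\infty\le 2\norm{H}_\infty$ and the Glivenko--Cantelli/Deheuvels results, weak convergence via Donsker and the empirical copula process of \citeN{seger12} combined with the continuous mapping theorem, the vanishing of $\copula$ on the edges, and finally the functional delta method for $s\mapsto s/(1-s)$ with derivative $(1+A(\bw))^2$. The only cosmetic difference is your remainder $R_n$, which does not arise when $\widehat{\nu}_n$ is written directly as an integral against the empirical distribution of the pseudo-observations (as in the paper), and is in any case uniformly $O(1/n)$ and harmless.
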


The two conditions (C1)--(C2) are not necessarily satisfied by $\widehat{A}_n^{\MD}$. To ensure both conditions, we propose a projection method based on Bernstein polynomials.

\section{Estimation based on Bernstein polynomials}
\label{sec:bern}

\subsection{Bernstein polynomials on the simplex}
\label{subsec:bern} 

Multivariate Bernstein polynomials, defined on a cube or on a simplex, have been widely discussed in mathematics and statistics, see for example \citeN{Ditzian86} and \citeN{petrone04}. Here our  focus is
on approximating a bounded function $f$ on the simplex $\simp$. In the univariate case, the shape  features of the original function are preserved by its  Bernstein approximation. 
For higher dimensions, shape properties like  convexity may no longer be retained.
The Bernstein--B\'{e}zier polynomials (\citeNP{sauer91}) solve this issue and preserve various shape properties (\citeNP{li11}, \citeNP{lai93}). 

Fix the dimension $d \ge 2$. For positive integer $k$, let $\Gamma_k$ be the set of multi-indices $\balpha = (\alpha_1, \ldots, \alpha_{d-1}) \in \{0, 1, \ldots, k\}^{d-1}$ such that $\alpha_1 + \cdots + \alpha_{d-1} \le k$. The cardinality of $\Gamma_k$ is equal to the number of multi-indices $\balpha \in \{0, 1, \ldots, k\}^d$ such that $\alpha_1 + \cdots + \alpha_d = k$; just set $\alpha_d = k - \alpha_1 - \cdots - \alpha_{d-1}$. Replacing each $\alpha_j$ by $\alpha_j + 1$, we find that the number of such multi-indices is also equal to the number of compositions of the integer $k+d$ into $d$ positive integer parts. The number of such compositions is equal to
\begin{equation}
\label{eq:p}
p_k = \binom{k+d-1}{d-1},
\end{equation}
and so is the cardinality of $\Gamma_k$.
%
Define the Bernstein basis polynomial $b_{\balpha}(\,\cdot\,;k)$ on $\simp$ of degree $k$ by
\begin{eqnarray}
\label{eq:bp}
b_{\balpha}( \bw; k) = \binom{k}{\balpha}\bw^{\balpha}, \qquad \bw\in \simp
\end{eqnarray}
where 
%
$$
\binom{k}{\balpha} = \frac{k!}{\alpha_{1}! \ldots \alpha_{d}!},\qquad
\bw^{\balpha} = w_1^{\alpha_1} \cdots w_d^{\alpha_d}.
$$
The $k$-th degree Bernstein polynomial associated to $A$ is defined as
\begin{equation}\label{eq:polyrap}
 B_A( \bw;k) = \sum_{\balpha \in \Gamma_k} A( \balpha/k) b_{\balpha}( \bw; k),
 \qquad \bw \in \simp.
\end{equation}
%
%
\begin{prop}
\label{prop:conv_bapp}
For every $A \in \spA$ and every $k=1,2,\ldots$,
$$
  \sup_{\bw \in \simp} \abs{ B_A(\bw;k) - A(\bw) } \leq \frac{d}{2\sqrt{k}}.
$$
%
%
\end{prop}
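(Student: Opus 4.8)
The plan is to exploit the probabilistic representation of Bernstein polynomials. Fix $\bw \in \simp$ and set $w_d = 1 - w_1 - \cdots - w_{d-1}$, so that $(w_1, \ldots, w_d)$ is a probability vector. Let $\bN = (N_1, \ldots, N_d)$ be a multinomial random vector with $k$ trials and cell probabilities $(w_1, \ldots, w_d)$; then $(N_1, \ldots, N_{d-1})$ takes values in $\Gamma_k$ and $\prob(\bN = k\,\bw^{(\balpha)}) = b_{\balpha}(\bw;k)$ in the obvious notation. Consequently $B_A(\bw;k) = \expect[A(\bN/k)]$, where $\bN/k = (N_1/k, \ldots, N_{d-1}/k)$, and the error becomes
$$
  B_A(\bw;k) - A(\bw) = \expect\bigl[ A(\bN/k) - A(\bw) \bigr].
$$
This is the standard approach for Bernstein-type approximation bounds, and the only subtlety is that $A$ is a priori only convex and bounded, not smooth, so we cannot Taylor-expand.

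Next I would use the regularity that the class $\spA$ does provide, namely a Lipschitz property. From conditions (C1) and (C2) one shows that every $A \in \spA$ is Lipschitz on $\simp$ with respect to the $\ell^1$-norm with constant $1$: indeed a convex function lying between $\max_i w_i$ and $1$ on the simplex has all directional slopes bounded by $1$ in absolute value (move from a point toward a vertex; the function changes by at most $1 - 1/d < 1$ over a coordinate change of total variation at most $1$, and convexity propagates the slope bound). Granting $\abs{A(\bu) - A(\bv)} \le \sum_{i=1}^d \abs{u_i - v_i}$ for $\bu, \bv \in \simp$ (with $u_d, v_d$ the completed coordinates), we get
$$
  \abs{B_A(\bw;k) - A(\bw)}
  \le \expect \sum_{i=1}^d \Bigl| \frac{N_i}{k} - w_i \Bigr|
  \le \sum_{i=1}^d \sqrt{\variance(N_i/k)}
  = \sum_{i=1}^d \sqrt{ \frac{w_i(1-w_i)}{k} },
$$
by the Cauchy--Schwarz (Jensen) inequality $\expect|Z| \le (\expect Z^2)^{1/2}$ applied to each centered coordinate $Z = N_i/k - w_i$.

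Finally I would optimize the bound $\sum_{i=1}^d \sqrt{w_i(1-w_i)}$ over the probability simplex. By concavity of $t \mapsto \sqrt{t(1-t)}$ and Jensen (or simply $\sqrt{w_i(1-w_i)} \le \sqrt{w_i}\cdot 1$ combined with Cauchy--Schwarz $\sum \sqrt{w_i} \le \sqrt{d}$), one finds $\sum_{i=1}^d \sqrt{w_i(1-w_i)} \le d \sqrt{(1/d)(1-1/d)} = \sqrt{d(d-1)} \le d$. Dividing by $\sqrt{k}$ yields
$$
  \sup_{\bw \in \simp} \abs{B_A(\bw;k) - A(\bw)} \le \frac{\sqrt{d(d-1)}}{\sqrt{k}} \le \frac{d}{\sqrt{k}},
$$
which is in fact slightly stronger than the claimed $d/(2\sqrt{k})$; the factor $1/2$ in the statement presumably comes from a sharper handling of $\expect|N_i/k - w_i|$ — e.g. using that for a binomial the mean absolute deviation is noticeably smaller than the standard deviation — so if the stated constant is to be matched exactly I would replace the crude $\expect|Z| \le \sqrt{\variance Z}$ step by the bound $\expect|N_i/k - w_i| \le \tfrac12 \sqrt{w_i(1-w_i)/k}$ or an equivalent refinement, and then re-run the simplex optimization.

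I expect the main obstacle to be the Lipschitz estimate for members of $\spA$: establishing the sharp slope bound from convexity plus the sandwich inequality in (C2) requires a careful geometric argument on the simplex (relating an arbitrary coordinate direction to directions pointing at the vertices, where the bounds in (C2) are tight), rather than any hard analysis. Once that Lipschitz property is in hand, the rest is the routine multinomial-moment computation sketched above.
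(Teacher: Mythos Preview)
Your approach is exactly the paper's: represent $B_A(\bw;k)=\expect[A(\bN/k)]$ with $\bN$ multinomial, use that every $A\in\spA$ is $1$-Lipschitz in the $\ell^1$-norm, and bound each $\expect|N_i/k-w_i|$ by the binomial standard deviation via Cauchy--Schwarz.

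There is, however, a concrete slip in your last step. The bound $\sqrt{d(d-1)}/\sqrt{k}$ is \emph{weaker}, not stronger, than $d/(2\sqrt{k})$: for $d\ge 2$ one has $d(d-1)\ge d^2/4$, i.e.\ $\sqrt{d(d-1)}\ge d/2$. The factor $1/2$ in the statement does not come from any refined MAD inequality for the binomial (your suggested bound $\expect|N_i/k-w_i|\le\tfrac12\sqrt{\variance(N_i/k)}$ is false; already for a standard normal the ratio is $\sqrt{2/\pi}\approx 0.8$). It comes from the trivial observation $w_i(1-w_i)\le 1/4$, which gives
\[
\sum_{i=1}^d \sqrt{\frac{w_i(1-w_i)}{k}} \le \sum_{i=1}^d \frac{1}{2\sqrt{k}} = \frac{d}{2\sqrt{k}}.
\]
Replace your Jensen optimization over the simplex by this one-line bound and the proof is complete and identical to the paper's.
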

The family of Bernstein--B\'ezier polynomials of degree $k$ is defined as the set 
$$
\mathcal{B}_k =
\left\{ \sum_{\balpha \in \Gamma_k} \beta_{\balpha} \, b_{\balpha}( \, \cdot \, ; k ) : \bbeta \in [0,1]^{p_k} \right\}.
$$
For $\bw \in \simp$, let $\bb_k( \bw )$ be the row vector 
$( b_{\balpha}( \bw; k ), \balpha \in \{0, 1, \ldots, k\}^d: \alpha_1 + \cdots + \alpha_d = k)$. In matrix notation, we have 
$\sum_{\balpha \in \Gamma_k} \beta_{\balpha} \, b_{\balpha}( \bw ; k ) = \bb_k( \bw ) \, \bbeta$, where $\bbeta$ is viewed as a column vector.
\subsection{Shape-preserving estimator}\label{subsec:est}

In this section, we describe how to use
Bernstein--B\'{e}zier polynomials to obtain a 
projection estimator (\shortciteNP{fil+g+s08}) that satisfies (C1)--(C2).
Given a pilot estimator, say $\widehat{A}_n$, the idea is to seek approximate solutions to the constrained optimization problem
%
$$
\widetilde{A}_n=\argmin_{A\in\spA} \norm{ \widehat{A}_n-A }_2.
$$
%
There is no closed-form solution to the above 
equation, and so an approximation based on the sieves method 
is explored. 
Consider a sequence $\spA_k\subseteq \spA$
of constrained multivariate Bernstein--B\'{e}zier polynomial families on $\simp$ given by
\begin{equation}
\label{eq:sievespace}
\spA_k = \left\{ \bw \mapsto B(\bw;k) = \bb_k(\bw)\bbeta_k: 
  \bbeta_k \in [0,1]^{p_k} \text{ such that } \bR_k\bbeta_k \geq \br_k \right\}.
\end{equation}
%
Here, $\bR_k=[\bR^{(1)}_k,\bR^{(2)}_k,\bR^{(3)}_k]^\top$ and $\br_k=[\br^{(1)}_k,\br^{(2)}_k,\br^{(3)}_k]^\top$ are a $(q\times p_k)$ full row rank matrix and a $(q\times 1)$ vector respectively such that the constraint $\bR_k\bbeta_k \geq \br_k$ on the coefficient vector $\bbeta_k$ ensures that each member of $\spA$ satisfies (C1)--(C2).
Details for deriving the block matrices
and vectors of constraints are provided below.
\begin{itemize}
\item[\rone)]
A sufficient condition to guarantee that the function $\bw \mapsto B(\bw;k)$ on $\simp$
is convex is that its Hessian matrix be positive semi-definite.
In order to enforce the latter, we resort by applying Theorem 1 in \citeN{lai93}. 
First, for $s\neq r\in\{0,\ldots,d-1\}$
and two vectors $\bv_r$ and $\bv_s$, where $\bv_r=\bzero$ if $r=0$ and $\bv_r=\be_r$ if $r>0$ with 
$\be_r$ the canonical unit vector (same for $\bv_s$),
the directional derivative of $B$ with respect to the direction $\overrightarrow{\bv_r \bv_s}$ is
$$
D_{\bv_s-\bv_r} B(\bw;k)=k \sum_{\balpha\in\Gamma_{k-1}}\Delta_{s,r}\beta_{\balpha}b_{\balpha} (\bw;k-1), \quad \bw \in \simp
$$
where 
$
\Delta_{s,r}\beta_{\balpha}=(\beta_{\balpha+\bv_s}-\beta_{\balpha+\bv_r}).
$
Second, the second directional derivative of $B$ with respect to the directions $\overrightarrow{\bv_r \bv_s}$ and $\overrightarrow{\bv_r \bv_t}$ is 
$$
D'_{\bv_s-\bv_r,\bv_t-\bv_r}B(\bw;k)=k(k-1) \sum_{\balpha\in\Gamma_{k-2}} 
\Delta_{t,r}\Delta_{s,r}\beta_{\balpha}\,b_{\balpha} (\bw;k-2),\quad \bw\in\simp.
$$
%
Then, the Hessian matrix of $B(\bw;k)$, $\bw\in\simp$, is
$
H_{B}=[ D'_{\bv_s,\bv_t} B(\bw;k)]_{s,t \in \{1,\ldots, d-1\},r=0},
$
and it can be written as 
$$
H_{B}=k(k-1)\sum_{\balpha\in\Gamma_{k-2}}\Sigma_\balpha\,b_{\balpha} (\bw;k-2),\quad \bw\in\simp,
$$
where, for all $\balpha\in\Gamma_{k-2}$, $\Sigma_\balpha$ is a symmetric $(d-1) \times (d-1)$ matrix given by
$$
\Sigma_\balpha=
\begin{pmatrix}
\Delta^2_{1,0}\beta_{\balpha}&	\Delta_{1,0}\Delta_{2,0}\beta_{\balpha} & \cdots& \cdots& \Delta_{1,0}\Delta_{d-1,0}\beta_{\balpha}\\
&\Delta^2_{2,0}\beta_{\balpha} & \Delta_{2,0}\Delta_{3,0}\beta_{\balpha} & \cdots& \Delta_{2,0}\Delta_{d-1,0}\beta_{\balpha}\\
&&\vdots&\vdots&\vdots\\
&&&&\Delta^2_{d-1,0}\beta_{\balpha}
\end{pmatrix}.
$$
By the weak diagonal dominance criterion \cite{lai93} in order to guarantee that 
$\Sigma_\balpha$ is positive semi-definite, it is sufficient to check, for all $\balpha\in\Gamma_{k-2}$ 
and $i\in \{1,\ldots,d-1\}$, the conditions 
$$
\Delta_{i,0}^2 \beta_{\balpha} - \sum_{\substack{j\neq i}} |\Delta_{i,0} \Delta_{j,0} \beta_{\balpha}|\geq 0.
$$
Such conditions produce constraints that are more severe than necessary.
The above conditions can be synthesized in matrix form as $\bR^{(1)}_k \bbeta_k \geq \br^{(1)}_k$ where $\bR^{(1)}_k$ is a $(p_{k-2}(d-1)2^{d-2} \times p_k)$ matrix and  $\br^{(1)}_k$ is the corresponding null vector.
For example, with $d=3$ and $k=3$,
$$
\bR^{(1)}_3=
{\scriptsize 
  \left( 
  \begin{array}{rrrrrrrrrr}
   0 & 1 & 0 & 0 & -1 & -1 & 0 & 1 & 0 & 0 \\ 
   2 & -1 & 0 & 0 & -3 & 1 & 0 & 1 & 0 & 0 \\ 
   0 & -1 & 1 & 0 & 1 & -1 & 0 & 0 & 0 & 0 \\ 
   2 & -3 & 1 & 0 & -1 & 1 & 0 & 0 & 0 & 0 \\ 
   0 & 0 & 1 & 0 & 0 & -1 & -1 & 0 & 1 & 0 \\ 
   0 & 2 & -1 & 0 & 0 & -3 & 1 & 0 & 1 & 0 \\ 
   0 & 0 & -1 & 1 & 0 & 1 & -1 & 0 & 0 & 0 \\ 
   0 & 2 & -3 & 1 & 0 & -1 & 1 & 0 & 0 & 0 \\ 
   0 & 0 & 0 & 0 & 0 & 1 & 0 & -1 & -1 & 1 \\ 
   0 & 0 & 0 & 0 & 2 & -1 & 0 & -3 & 1 & 1 \\ 
   0 & 0 & 0 & 0 & 0 & -1 & 1 & 1 & -1 & 0 \\ 
   0 & 0 & 0 & 0 & 2 & -3 & 1 & -1 & 1 & 0
\end{array} 
\right),
}\quad
\br^{(1)}_3=
{\scriptsize 
\begin{pmatrix}
0\\
0\\
0\\
0\\
0\\
0\\
0\\
0\\
0\\
0\\
0\\
0
\end{pmatrix}.
}$$
A consequence of this approach is that 
\item[\rtwo)] 
$B$ satisfies the upper bound condition in (C2) 
if $\beta_{\balpha}=1$ for the set of coefficients
$\left\{\beta_{\balpha}: {\balpha} = \bzero \text{ or }  {\balpha} = k \, \be_i, \, \forall i=1,\dots, d-1 \right\}$. 
Thus, the $(2d\times p_k)$ matrix and $2d$-dimensional vector of restrictions
are equal to
$$
\bR^{(2)}_k=
{\scriptsize 
  \left( 
  \begin{array}{rrrrrrrr}
1&	 0&\cdots&0&\cdots&0&\cdots&0\\
-1&	 0&\cdots&0&\cdots&0&\cdots&0\\
0&	 0&\cdots&1&\cdots&0&\cdots&0\\
0&	 0&\cdots&-1&\cdots&0&\cdots&0\\
\vdots&\vdots&\vdots&\vdots&\vdots&\vdots&\vdots&\vdots\\
0&	 0&\cdots&0&\cdots&1&\cdots&0\\
0&	 0&\cdots&0&\cdots&-1&\cdots&0\\
\end{array} 
\right),
}
\quad
\br^{(2)}_k=
{\scriptsize 
  \left( 
\begin{array}{r}
1\\
-1\\
1\\
-1\\
\vdots\\
1\\
-1
\end{array} 
\right).
}
$$
\item[\rthree)] 
$B$ satisfies the lower bound condition in (C2) if the restrictions R1)-R2) hold
and the following constraints are fulfilled. 
Specifically, for all $(i,j)\in \{0,\ldots,d-1\}^2$, $i\neq j$, the first directional derivatives with respect to 
$\overrightarrow{\bv_i \bv_j}$, evaluated at the vertices of the simplex, are compared with
the first directional derivatives of the planes $z_0=1$, $z_1=w_1$, $z_2=w_2$, $\ldots$, $z_{d}=1-w_1-w_2-\cdots-w_{d-1}$, with respect to the same directions.  
So, it is sufficient to check the conditions 
$$
D_{\bv_i-\bv_j} B(\bv_j;k) \geq-1, \quad \forall\;(i,j) \in \{0,\ldots,d-1\}^2,\,i\neq j.
$$
As a consequence, it is sufficient to check the conditions
$
\beta_{\balpha} > 1 - 1/k
$ 
for the set of coefficients 
$
\{\beta_{\balpha}: \balpha = \be_i \text{ or }  \balpha = (k-1) \be_i  
\text{ or } \balpha = (k-1) \be_i + \be_j, \; \forall j\neq i=1,\dots, d-1 \}.
$
This can be synthesized in matrix form as $\bR^{(3)}_k \bbeta_k \geq \br^{(3)}_k$ where $\bR^{(3)}_k$ is a 
$(d(d-1)\times p_k)$ matrix and  $\br^{(3)}_k$ is the corresponding vector of $1-1/k$ vaules.
 For example, when $d=3$ and  $k=3$, the constraint matrix is the following: 
$$
\bR^{(3)}_3=
{\scriptsize 
\left( 
	\begin{array}{rrrrrrrrrr}
	0 & 1 & 0 & 0 & 0 & 0 & 0 & 0 & 0 & 0 \\ 
	0 & 0 & 0 & 0 & 1 & 0 & 0 & 0 & 0 & 0 \\ 
	0 & 0 & 1 & 0 & 0 & 0 & 0 & 0 & 0 & 0 \\ 
	0 & 0 & 0 & 0 & 0 & 0 & 0 & 1 & 0 & 0 \\ 
	0 & 0 & 0 & 0 & 0 & 0 & 1 & 0 & 0 & 0 \\ 
	0 & 0 & 0 & 0 & 0 & 0 & 0 & 0 & 1 & 0 \\  
   \end{array} 
\right),
}\quad
\br^{(3)}_3=
{\scriptsize 
	\begin{pmatrix}
	1-1/k\\
	1-1/k\\
	1-1/k\\
	1-1/k\\
	1-1/k\\
	1-1/k\\
	\end{pmatrix}.
}$$
\end{itemize}
The use of the third restriction is justified by the following result.
\begin{prop}\label{prop:lower}
Let $B_A$ be the polynomial \eqref{eq:polyrap}. Assume that $B_A$ is convex on the simplex
and $B_A(\bv_j;k)=1$ for all $j\in\{0,\ldots,d-1\}$. Then, for all $\bw\in\simp$
$$
B_A(\bw;k)\geq\max(w_1,\ldots,w_d) \quad \iff \quad
D_{\bv_i-\bv_j}B_A(\bv_j;k)\geq -1,
$$
for all $(i,j)\in\{0,\ldots,d-1\}^2$, $i\neq j$.
\end{prop}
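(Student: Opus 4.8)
The plan is to exploit convexity to reduce the global lower bound $B_A(\bw;k)\ge\max(w_1,\ldots,w_d)$ to a statement about first-order behavior at the vertices. Write $\bv_0,\bv_1,\ldots,\bv_{d-1}$ for the vertices of $\simp$ (with $\bv_0=\bzero$ and $\bv_j=\be_j$ for $j\ge1$), and let $M(\bw)=\max(w_1,\ldots,w_d)$ where $w_d=1-w_1-\cdots-w_{d-1}$. Both $B_A(\,\cdot\,;k)$ and $M$ are convex on $\simp$, both equal $1$ at every vertex (for $B_A$ this is the hypothesis; for $M$ it is immediate since each vertex has one coordinate equal to $1$), and $M$ is piecewise linear. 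The key observation is that a convex function that agrees with $M$ at all $d$ vertices dominates $M$ on all of $\simp$ if and only if it dominates $M$ in a neighborhood of each vertex, and near a vertex $\bv_j$ the function $M$ is linear, namely $M(\bw)=w_j$ near $\bv_j$ for $j\ge1$ and $M(\bw)=1-w_1-\cdots-w_{d-1}$ near $\bv_0$. I would make this precise as follows.

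\emph{($\Rightarrow$)} If $B_A(\bw;k)\ge M(\bw)$ for all $\bw$, then along the edge from $\bv_j$ to $\bv_i$ we have $B_A\big((1-t)\bv_j+t\bv_i;k\big)\ge M\big((1-t)\bv_j+t\bv_i\big)$; since both sides equal $1$ at $t=0$ and $M$ restricted to that edge has the form $1-t+t\,(\text{something})$, differentiating the inequality from the right at $t=0$ (the directional derivatives exist because $B_A$ is a polynomial and $M$ is piecewise linear) yields $D_{\bv_i-\bv_j}B_A(\bv_j;k)\ge D_{\bv_i-\bv_j}M(\bv_j)$. A direct computation of the latter directional derivative of $M$ at a vertex, using the explicit linear form of $M$ near $\bv_j$, gives exactly $-1$ in the relevant cases, which produces $D_{\bv_i-\bv_j}B_A(\bv_j;k)\ge-1$.

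\emph{($\Leftarrow$)} Conversely, suppose $D_{\bv_i-\bv_j}B_A(\bv_j;k)\ge-1$ for all $i\ne j$. Fix $\bw\in\simp$ and consider the region where $M(\bw)=w_j$ (one of finitely many simplicial cells, each being the convex hull of $\bv_j$ and the barycenters of the faces containing $\bv_j$; on such a cell the affine function $w_j$ interpolates the values of $M$ at the cell's vertices). It suffices to show $B_A\ge w_j$ on that cell. Since $B_A$ is convex and $w_j$ is affine, $B_A-w_j$ is convex on the cell; a convex function on a simplex attains its minimum at a vertex, so it is enough to check $B_A(\bu;k)\ge u_j$ at each vertex $\bu$ of the cell. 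At $\bu=\bv_j$ both sides are $1$. At the barycenter-type vertices, one writes them as short moves from $\bv_j$ along edge directions $\bv_i-\bv_j$ and uses convexity of $B_A$ together with the hypothesis $D_{\bv_i-\bv_j}B_A(\bv_j;k)\ge-1$: convexity gives $B_A(\bv_j+t(\bv_i-\bv_j);k)\ge B_A(\bv_j;k)+t\,D_{\bv_i-\bv_j}B_A(\bv_j;k)\ge 1-t$, and one checks that $1-t$ is exactly the value of $u_j$ at the corresponding point, so the inequality propagates to all vertices of the cell and hence, by convexity, to the whole cell and then to all of $\simp$. Finally, the reduction from the abstract directional-derivative condition to the concrete coefficient inequalities $\beta_{\balpha}\ge 1-1/k$ on the stated index set follows by substituting $\bw=\bv_j$ into the formula $D_{\bv_s-\bv_r}B(\bw;k)=k\sum_{\balpha\in\Gamma_{k-1}}\Delta_{s,r}\beta_{\balpha}\,b_{\balpha}(\bw;k-1)$, noting that $b_{\balpha}(\bv_j;k-1)$ is nonzero (in fact $1$) only for the single multi-index $\balpha=(k-1)\bv_j$, so the sum collapses to a single difference of coefficients.

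The main obstacle is the $(\Leftarrow)$ direction: one must correctly identify the piecewise-linear structure of $M$ — i.e., the decomposition of $\simp$ into the cells $\{M(\bw)=w_j\}$ and the description of each cell as a simplex whose vertices are $\bv_j$ together with barycenters of faces through $\bv_j$ — and verify that the affine piece $w_j$ really does interpolate $M$ at those vertices, so that the convexity-plus-vertex-minimum argument applies cell by cell. Handling the vertex $\bv_0=\bzero$, where $M(\bw)=1-w_1-\cdots-w_{d-1}$, requires the same bookkeeping with the direction conventions $\bv_r=\bzero$ for $r=0$. Once the combinatorial geometry of $M$ is pinned down, the analytic content is just the elementary convexity inequality $B_A(\bv_j+t(\bv_i-\bv_j);k)\ge 1-t$.
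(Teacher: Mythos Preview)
Your $(\Rightarrow)$ direction is fine and matches the paper's approach.

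Your $(\Leftarrow)$ direction contains a genuine error. You write that ``$B_A-w_j$ is convex on the cell; a convex function on a simplex attains its minimum at a vertex.'' This is false: convex functions attain their \emph{maximum} at an extreme point, not their minimum (think of $w\mapsto (w-\tfrac12)^2$ on $[0,1]$). So verifying $B_A(\bu;k)\ge u_j$ at the vertices of the cell $\{M(\bw)=w_j\}$ tells you nothing about the interior. Moreover, for $d\ge 3$ those cells are not simplices: the cell near $\bv_j$ has the $2^{d-1}$ vertices $\bv_j$, the midpoints of edges through $\bv_j$, the centroids of $2$-faces through $\bv_j$, and so on up to the full centroid --- combinatorially a cube, not a simplex. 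And even if you could enumerate those vertices, your proposed check ``write them as short moves from $\bv_j$ along edge directions $\bv_i-\bv_j$'' only reaches points on edges, not the higher barycenters.

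The paper's argument bypasses the cell decomposition entirely. By convexity, $B_A$ lies above its tangent hyperplane at each vertex:
\[
  B_A(\bw;k)\;\ge\;B_A(\bv_j;k)+(\bw-\bv_j)^\top\nabla B_A(\bv_j;k)
  \;=\;1+\sum_{i=1}^{d-1}(w_i-\delta_{ij})\,D_{\bv_i-\bv_0}B_A(\bv_j;k),
\]
valid for \emph{all} $\bw\in\simp$, not just on an edge. One then uses the hypotheses $D_{\bv_i-\bv_j}B_A(\bv_j;k)\ge -1$ (equivalently $D_{\bv_i-\bv_0}B_A(\bv_j;k)\ge D_{\bv_j-\bv_0}B_A(\bv_j;k)-1$) together with $D_{\bv_0-\bv_j}B_A(\bv_j;k)\ge -1$ to bound the right-hand side below by $w_j$. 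Since this holds for every $j$, one gets $B_A(\bw;k)\ge\max_j w_j$ directly. Your one-dimensional inequality $B_A(\bv_j+t(\bv_i-\bv_j);k)\ge 1-t$ is the restriction of this tangent-plane bound to a single edge; the fix is to use the full $(d-1)$-dimensional supporting hyperplane instead.
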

Recall that the approximate projection estimator of $A$ based on a pilot estimator $\widehat{A}_n$ is given by the solution to the optimization problem
\begin{equation}
\label{eq:approx}
\widetilde{A}_{n,k}=\argmin_{B \in\spA_k} \norm{ \widehat{A}_n-B}_2.
\end{equation}
In case the pilot estimator is the madogram estimator $\widehat{A}_n^{\MD}$, the corresponding projection estimator is denoted by $\widetilde{A}_{n,k}^{\MD}$.

In practice, the estimator $\widetilde{A}_{n,k}$ is evaluated on a finite set of points 
$\{\bw_q : q=1,\ldots,Q\}$, with $Q\in \nat$ and $\bw_q\in\simp$. 
The discretized version of the above solution is given by
%
\begin{equation}\label{eq:BP-MD}
\widetilde{A}_{n,k}(\bw_q)=\bb_k(\bw_q)\widehat{\bbeta}_k,\quad \bw_q\in\simp,\quad q=1,\ldots,Q,
\end{equation}
where $\widehat{\bbeta}_k$ is the minimizer of the constrained least-squares problem
$$
\widehat{\bbeta}_k=\argmin_{\bbeta_k \in [0,1]^{p_k} : \bR_k\bbeta_k \geq \br_k}
\frac{1}{Q}\sum_{q=1}^Q \bigl( \bb_k(\bw_q)\bbeta_k-\widehat{A}_n(\bw_q) \bigr)^2.
$$
This is a quadratic programming problem, whose solution is
\begin{equation}\label{eq:beta}
  \widehat{\bbeta}_k 
  = \bbeta'_k - (\bb_k^\top \bb_k)^{-1} \br_k^\top \bgamma,
\end{equation}
where $\bgamma$ is a vector of Lagrange multipliers and $\bbeta'_k = (\bb^\top_k \bb)^{-1} \bb^\top \widehat{A}_n$ is the unconstrained least squares estimator. The vectors
$\widehat{\bbeta}_k$ and $\bgamma$ can be efficiently computed with an iterative quadratic programming algorithm (e.g. \citeNP{goldfarb+i83}).
A high resolution of \eqref{eq:BP-MD} is obtained with increasing values of $Q$.
Numerical experiments showed that a close approximation of the true Pickands dependence function is already reached with moderate values of $Q$.
However, $Q$ should not be seen as an additional parameter of the projection estimator. The solution \eqref{eq:BP-MD} provides better approximations of the true Pickands dependence function for increasing sample sizes $n$ and polynomial degrees $k$.
%

In order to state the asymptotic distribution of the projection estimator, the following result is required.
\begin{prop}\label{prop:nested}
$\spA_k$, $k=1,2,\ldots$ is a nested sequence in $\spA$.
Furthermore, if $A\in\spA$ satisfies the condition 
\begin{equation}\label{eq:wddA}
\Delta_{i,0}^2A(\balpha/k)-\sum_{j\neq i} |\Delta_{i,0}\Delta_{j,0}A(\balpha/k)|\geq 0, \quad \forall\,k, \balpha \in \Gamma_{k-2}, i\in\{1,\ldots,d-1\}, 
\end{equation}
then there exist polynomials $A_k \in \spA_k$ such that $\lim_{k \to \infty} \sup_{\bw \in \simp} \abs{ A_k( \bw ) - A( \bw ) } =~0$.
%
\end{prop}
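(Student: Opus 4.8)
The statement has two parts. For the nestedness claim $\spA_k \subseteq \spA_{k+1}$, the plan is to use the classical degree-elevation identity for Bernstein--B\'ezier polynomials on the simplex: any $B(\,\cdot\,;k) = \bb_k(\cdot)\bbeta_k$ can be rewritten as $B(\,\cdot\,;k+1) = \bb_{k+1}(\cdot)\bbeta_{k+1}$ where each new coefficient $\beta_{\balpha}^{(k+1)}$ is a convex combination of old coefficients $\beta_{\bgamma}^{(k)}$ with $\bgamma$ ranging over the multi-indices obtained by decreasing one entry of $\balpha$ by one. First I would record this explicit averaging formula. Then I must check that the three constraint blocks $\rone$--$\rthree$ are preserved under this averaging. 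The membership $\bbeta_{k+1} \in [0,1]^{p_{k+1}}$ is immediate since a convex combination of numbers in $[0,1]$ stays in $[0,1]$. For the upper-bound restriction $\rtwo$ (the vertex coefficients equal $1$), note that the coefficients of $B(\,\cdot\,;k+1)$ indexed by $\bzero$ and $(k+1)\be_i$ equal the corresponding vertex coefficients of $B(\,\cdot\,;k)$, which are $1$ by hypothesis. For $\rone$, the cleanest route is not to manipulate the finite-difference inequalities directly but to observe that $\rone$ was only ever a \emph{sufficient} condition for convexity of the function $\bw \mapsto B(\bw;k)$; since convexity of a function is a property of the function and not of its representation, if $B(\,\cdot\,;k) \in \spA_k$ then $\bw\mapsto B(\bw;k)$ is convex, hence automatically $B(\,\cdot\,;k+1)$ satisfies the true convexity requirement. \emph{However} --- and this is the main obstacle --- the set $\spA_k$ is defined by the \emph{sufficient} weak-diagonal-dominance conditions, not by genuine convexity, so preservation of convexity does not by itself give $\bbeta_{k+1} \in \spA_{k+1}$. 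One must show the diagonal-dominance inequalities \eqref{eq:wddA} themselves are inherited under degree elevation. The key lemma here is that the second finite differences $\Delta_{t,r}\Delta_{s,r}\beta_{\balpha}^{(k+1)}$ of the degree-elevated coefficients are again convex combinations (or nonnegative-coefficient combinations) of the second finite differences of the original coefficients; this follows by commuting the difference operators past the degree-elevation averaging operator. Granting that, the weak diagonal dominance at level $k$ passes to level $k+1$ by the triangle inequality applied termwise. I would spell this commutation out carefully for $\rone$ and do the analogous (easier) bookkeeping for $\rthree$, whose inequalities involve first directional derivatives at the vertices and are likewise stable under the averaging.

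For the approximation claim, the strategy is to take $A_k = B_A(\,\cdot\,;k)$, the degree-$k$ Bernstein polynomial of $A$ defined in \eqref{eq:polyrap}. By Proposition~\ref{prop:conv_bapp} we already have $\sup_{\bw\in\simp}|B_A(\bw;k) - A(\bw)| \le d/(2\sqrt{k}) \to 0$, so the only thing to verify is that $B_A(\,\cdot\,;k) \in \spA_k$ for every $k$, i.e.\ that its coefficient vector $\bbeta_k = (A(\balpha/k))_{\balpha\in\Gamma_k}$ satisfies $\bbeta_k\in[0,1]^{p_k}$ and $\bR_k\bbeta_k \ge \br_k$. Membership in $[0,1]^{p_k}$ is condition (C2). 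The upper-bound block $\rtwo$ holds because $A(\bzero/k) = A(\be_i)$ evaluated at a vertex equals $1$ by (C2). The lower-bound block $\rthree$ holds because the relevant first finite differences of $(A(\balpha/k))$ at the vertices are exactly $k$ times difference quotients of $A$ of step $1/k$, which by the lower bound $A \ge \max(w_1,\dots,w_d)$ and $A \le 1$ and the vertex values being $1$ are bounded below by $-1$ --- this is essentially the ``only if'' half of Proposition~\ref{prop:lower} transcribed to difference quotients, so I would invoke that proposition (or its proof) rather than redo the estimate. Finally the convexity block $\rone$ is exactly the hypothesis \eqref{eq:wddA}: the inequalities defining $\bR_k^{(1)}\bbeta_k \ge \br_k^{(1)}$ are precisely $\Delta_{i,0}^2 A(\balpha/k) - \sum_{j\ne i}|\Delta_{i,0}\Delta_{j,0}A(\balpha/k)| \ge 0$ for all $\balpha\in\Gamma_{k-2}$ and $i$, which is assumed. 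Hence $A_k \in \spA_k$ and the proposition follows.

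\textbf{Where the difficulty lies.} The approximation half is essentially immediate given Proposition~\ref{prop:conv_bapp} and \eqref{eq:wddA}. The genuinely technical part is the nestedness half, specifically proving that the weak-diagonal-dominance inequalities of block $\rone$ survive degree elevation. I would budget the bulk of the write-up for establishing the commutation identity between the finite-difference operators $\Delta_{s,r}$ and the degree-elevation operator on coefficient arrays, and for checking that the resulting combination weights are nonnegative and sum appropriately so that the triangle inequality closes the argument. If a clean commutation identity is awkward to state in the multi-index notation of Section~\ref{sec:bern}, an acceptable fallback is to verify nestedness in the weaker sense actually needed downstream --- namely that for every $B\in\spA_k$ there exists $B'\in\spA_{k+1}$ with $B' = B$ as functions --- which still follows from degree elevation together with the observation that a convex Bernstein--B\'ezier polynomial, once it is genuinely convex, can be re-expressed at higher degree with coefficients satisfying the sufficient conditions (a known fact, e.g.\ from the convergence of the control net to the graph); I would flag which version is being proved.
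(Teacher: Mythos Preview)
Your proposal is correct and follows essentially the same approach as the paper: degree elevation on the simplex to prove $\spA_k \subseteq \spA_{k+1}$, followed by direct verification that each constraint block $\rone$--$\rthree$ is inherited by the elevated coefficients, and then $A_k := B_A(\,\cdot\,;k)$ combined with Proposition~\ref{prop:conv_bapp} for the approximation half. The paper does exactly the commutation-and-averaging computation you anticipate for $\rone$: it writes $\Delta_{i,0}^2 \tilde\beta_{\balpha} - \sum_{j\ne i}(-1)^{I_{s,t}}\Delta_{i,0}\Delta_{j,0}\tilde\beta_{\balpha}$ as a nonnegative combination of the same expressions in the old coefficients, so your diagnosis of where the work lies is accurate; for $\rthree$ the paper checks the coefficient inequalities $\tilde\beta_{\balpha}\ge 1-1/(k+1)$ directly from the degree-raising formula rather than going through Proposition~\ref{prop:lower}, which is slightly quicker than what you sketch but equivalent.
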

The asymptotic distribution of the Bernstein projection estimator based on our multivariate madogram estimator $\widehat{A}_n^{\MD}$ is established in the following proposition.
\begin{prop}\label{prop:prop_bernproj}
Assume that the polynomial's degree, $k_n$, increases with the sample size $n$ in such a way that $k_n/n\rightarrow \infty$ as $n\rightarrow \infty$.
If the Pickands dependence function $A$ 
satisfies the condition \eqref{eq:wddA}, then, for some Gaussian process $Z$,
$$
\sqrt{n}(\widetilde{A}_{n,k_n}^{\MD}-A) \conW \argmin_{Z'\in T_\spA(A)}\|Z'-Z\|_2,\quad n\rightarrow \infty,
$$
in $L^2(\simp)$, where $T_\spA(A)$ is the tangent cone of $\spA$ at $A$, given by the set of limits of all the sequences $a_n(A_n-A)$, $a_n\geq0$ and $A_n\in \spA$.
\end{prop}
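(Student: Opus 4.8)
The plan is to view $\widetilde{A}_{n,k_n}^{\MD}$ as the metric projection, in the Hilbert space $L^2(\simp)$, of the pilot estimator $\widehat{A}_n^{\MD}$ onto the closed convex set $\spA_{k_n}$, and then to apply the functional delta method for metric projections onto a nested sequence of convex sets exhausting $\spA$. First I would record the structural facts that make this well posed. Each $\spA_k$ is the image of the compact convex polytope $\{\bbeta_k\in[0,1]^{p_k}:\bR_k\bbeta_k\geq\br_k\}$ under the continuous linear map $\bbeta_k\mapsto\bb_k(\,\cdot\,)\bbeta_k$, hence a compact convex subset of $L^2(\simp)$; the full class $\spA$ is a closed convex subset of $L^2(\simp)$, since conditions (C1)--(C2) are preserved under convex combinations and under $L^2$-limits along an a.e.-convergent subsequence; by construction $\spA_k\subseteq\spA$; and by Proposition~\ref{prop:nested} the $\spA_k$ are nested.

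Next I would quantify the sieve bias. Under assumption~\eqref{eq:wddA} the Bernstein polynomial $B_A(\,\cdot\,;k)$ of~\eqref{eq:polyrap} lies in $\spA_k$: its coefficient vector $\bigl(A(\balpha/k)\bigr)_\balpha$ satisfies R1 (this is precisely~\eqref{eq:wddA}), R2 (because $A\equiv1$ at every vertex of $\simp$ by (C2)), and R3 (because $A(\balpha/k)\geq\max_i w_i=1-1/k$ at each of the relevant points $\balpha/k$). Combining this with Proposition~\ref{prop:conv_bapp},
$$
d_{L^2}\bigl(A,\spA_{k_n}\bigr)\leq\norm{B_A(\,\cdot\,;k_n)-A}_2\leq\abs{\simp}^{1/2}\,\frac{d}{2\sqrt{k_n}},
$$
so the hypothesis $k_n/n\to\infty$ is exactly what forces $\sqrt n\,d_{L^2}(A,\spA_{k_n})\to0$: the sieve approximation bias is negligible on the $\sqrt n$ scale (this also yields, with an explicit rate, the approximation statement in Proposition~\ref{prop:nested}).

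Then comes the probabilistic core. By Theorem~\ref{prop:prop_multimado}(b), $\sqrt n(\widehat{A}_n^{\MD}-A)$ converges weakly in $\ell^\infty(\simp)$ to a tight centred Gaussian limit; since $\simp$ has finite Lebesgue measure the inclusion $\ell^\infty(\simp)\hookrightarrow L^2(\simp)$ is continuous, so by the continuous-mapping theorem the same convergence holds in $L^2(\simp)$ towards a centred Gaussian element $Z$. I would then invoke the projection-estimator theorem for convex sieves underlying \shortciteN{fil+g+s08} and \citeN{gudend+s12}: the metric projection onto a closed convex subset of a Hilbert space is non-expansive and Hadamard directionally differentiable, with derivative the metric projection onto the tangent cone; coupling this with the nestedness $\spA_{k_n}\subseteq\spA$ and the bias bound $\sqrt n\,d_{L^2}(A,\spA_{k_n})\to0$ shows that $\sqrt n(\Pi_{\spA_{k_n}}\widehat{A}_n^{\MD}-A)$ and $\sqrt n(\Pi_\spA\widehat{A}_n^{\MD}-A)$ have the same weak limit, namely the projection of $Z$ onto $T_\spA(A)$. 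Since $\widetilde{A}_{n,k_n}^{\MD}=\Pi_{\spA_{k_n}}(\widehat{A}_n^{\MD})$, this gives, in $L^2(\simp)$,
$$
\sqrt n\bigl(\widetilde{A}_{n,k_n}^{\MD}-A\bigr)\conW\argmin_{Z'\in T_\spA(A)}\norm{Z'-Z}_2,
$$
which is the assertion.

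I expect the main obstacle to be making that last invocation rigorous with a \emph{moving} feasible set. One must show that the tangent cones $T_{\spA_{k_n}}(\Pi_{\spA_{k_n}}(A))$ converge, in the Mosco / Painlev\'{e}--Kuratowski sense, to $T_\spA(A)$, and that the directional differentiability of the projection can be applied uniformly along $k_n$ so that the delta method survives with a target set that changes with $n$. Nestedness from Proposition~\ref{prop:nested} supplies the ``inner'' half of the set convergence and the $o(n^{-1/2})$ bias bound supplies the ``outer'' half, but assembling these into a single clean limit --- and checking that the limiting constraint is genuinely $T_\spA(A)$, the closure of the cone of all limits $a_n(A_n-A)$ with $A_n\in\spA$ and $a_n\geq0$ --- is the delicate step. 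By contrast, the Gaussian-process convergence is routine given Theorem~\ref{prop:prop_multimado}(b).
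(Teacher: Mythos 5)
You have assembled the right ingredients---transfer of the weak convergence of $\sqrt{n}(\widehat{A}_n^{\MD}-A)$ from $\ell^\infty(\simp)$ to $L^2(\simp)$, membership of $B_A(\,\cdot\,;k)$ in $\spA_k$ under \eqref{eq:wddA}, the $O(k_n^{-1/2})$ rate from Proposition~\ref{prop:conv_bapp}, and the observation that $k_n/n\to\infty$ makes the sieve bias negligible at the $\sqrt{n}$ scale---but the decisive step is not proved. Your plan is to obtain the limit of $\sqrt{n}(\widetilde{A}_{n,k_n}^{\MD}-A)$ directly from a delta method for metric projections onto a \emph{moving} convex set, which would require Mosco/Painlev\'e--Kuratowski convergence of the relevant tangent cones to $T_\spA(A)$ together with a differentiability statement valid uniformly along the sequence $k_n$; you yourself flag this assembly as the ``delicate step'' and leave it open. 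That is precisely the content of the proposition: nestedness plus $\sqrt{n}\,d_{L^2}(A,\spA_{k_n})\to 0$ do not by themselves deliver the tangent-cone convergence at the projected points nor license the delta method along a sequence of target sets, and no result is cited that covers this moving-set situation. As written, the proof has a genuine gap exactly where the work is needed.

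The paper closes this gap without any moving-set differentiability. It first applies Theorem~1 of \shortciteN{fil+g+s08} to the projection $\widetilde{A}_n^{\MD}=\argmin_{B\in\spA}\norm{\widehat{A}_n^{\MD}-B}_2$ onto the \emph{fixed} set $\spA$, obtaining the stated limit for $\sqrt{n}(\widetilde{A}_n^{\MD}-A)$, and then reduces the comparison with the sieve projection to the deterministic Hilbert-space inequality of Lemma~1 in \shortciteN{fil+g+s08}: for the nested closed convex sets $\spA_{k_n}\subseteq\spA$,
\[
  \norm{\widetilde{A}_{n,k_n}^{\MD}-\widetilde{A}_n^{\MD}}_2
  \le \bigl[\delta_{k_n}\bigl(2\norm{\widehat{A}_n^{\MD}-\widetilde{A}_n^{\MD}}_2+\delta_{k_n}\bigr)\bigr]^{1/2},
\]
where $\delta_{k_n}$ is bounded by the $L^2$ Hausdorff distance between $\spA$ and $\spA_{k_n}$. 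Your sieve-bias computation (which is where \eqref{eq:wddA} and Proposition~\ref{prop:conv_bapp} enter) gives $\delta_{k_n}=O(k_n^{-1/2})=o(n^{-1/2})$ under $k_n/n\to\infty$, while $\norm{\widehat{A}_n^{\MD}-\widetilde{A}_n^{\MD}}_2\le\norm{\widehat{A}_n^{\MD}-A}_2=O_p(n^{-1/2})$ because $A\in\spA$; hence $\norm{\widetilde{A}_{n,k_n}^{\MD}-\widetilde{A}_n^{\MD}}_2=o_p(n^{-1/2})$ and the two projections share the same $\sqrt{n}$-limit. If you wish to keep your formulation via Hadamard differentiability and Mosco convergence you would have to develop that argument in full; the quantitative comparison above is the short and rigorous way to finish.
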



It remains an open problem to establish the asymptotic behaviour of the projection estimator 
without condition~\eqref{eq:wddA}. Moreover, if the Pickands dependence function is sufficiently smooth, we conjecture that the approximation rate in Proposition~\ref{prop:conv_bapp} can be improved, leading a slower growth rate needed for the degree of the Bernstein polynomial in Proposition~\ref{prop:prop_bernproj}. The simulation results in Section~\ref{sec:num} confirm that polynomial degrees $k$ much lower than $n$ are already sufficient to achieve good results. 
%

Finally, note that Proposition~\ref{prop:prop_bernproj} and, in fact, everything else in this section applies to any estimator of the Pickands dependence function which satisfies a suitable functional central limit theorem. We are grateful to an anonymous Referee for having pointed this out.

\subsection{Confidence bands}\label{subsec:comp}
%
%
We construct confidence bands using a resampling method. For $\bw\in\simp$
and $0<\tilde{\alpha}<1$, the bootstrap $(1-\tilde{\alpha})$ pointwise confidence band, based on the estimates 
$\widetilde{A}^{*(r)}_{n,k}(\bw)$, $r=1,2,\ldots$, obtained from the bootstrapped sample $\bX^{(r)}_n=(\bX^{(r)}_1,\ldots,\bX^{(r)}_d)$,
has the drawback that the lower and upper limits of the band are rarely  convex and continuous.
To bypass this hurdle, we followed the strategy to work with the estimated  Bernstein polynomials' coefficients themselves. 
Specifically, let  $\widehat{\bbeta}^{*(r)}_k$ be the  Bernstein polynomials' coefficient estimator based on the bootstrap sample $\bX_n^{(r)}$, $r=1,2,\ldots$, we define a bootstrap simultaneous $(1-\tilde{\alpha})$ confidence band
specifying the lower $\widetilde{A}^{L}_{n,k}(\bw)$ and upper $\widetilde{A}^{U}_{n,k}(\bw)$ limits 
as
\begin{equation}
\label{eq:confidentIntervals}
\left[\sum_{\balpha \in \Gamma_k} \widehat{\beta}^{*\lceil r(\tilde{\alpha}/2) \rceil}_{\balpha} b_{\balpha} (\bw;k);
\;
\sum_{\balpha \in \Gamma_k} \widehat{\beta}^{*\lceil r(1-\tilde{\alpha}/2)\rceil}_{\balpha} b_{\balpha} (\bw;k)
\right],
\quad \bw\in\simp,
\end{equation}
where $\widehat{\beta}^{*\lceil r (\tilde{\alpha}/2)\rceil}_{\balpha}$ and
$\widehat{\beta}^{*\lceil r(1-\tilde{\alpha}/2)\rceil}_{\balpha}$,
for all $\balpha\in\Gamma_k$,
correspond to the
$\lceil r(\tilde{\alpha}/2)\rceil$ and $\lceil r(1-\tilde{\alpha}/2)\rceil$ ordered statistics
respectively and $b_{\balpha} (\bw;k)$ is the Bernstein basis polynomial of degree $k$, see \eqref{eq:bp}.
Although this approach does not guarantee convex confidence bands, 
it works very well in our simulations, where we find
that the convexity is violated only when dependence is weak.
Another possibility, that can be considered, is to bootstrap bands for unconstrained estimators and then apply projection to the lower and upper bound, as pointed out by an anonymous Referee.  Our specific simulations results indicate that our method 
performs slightly better than this valuable alternative.

\section{Simulations}\label{sec:num}

To visually illustrate the gain  in  implementing our  Bernstein-B\'ezier projection approach, 
Figure \ref{fig:comparison} compares the madogram (MD) estimator $\widehat{A}^{\MD}_{n}$ defined by (\ref{eq:mado}) 
with its  Bernstein-B\'ezier-projection (BP) version defined by (\ref{eq:BP-MD}) for the special case of  the symmetric logistic model (SL, \citeNP{tawn90}) with $d=3$ and  $\alpha' = 0.3$.  
For all sample sizes ($n=20,50,100$), an improvement can be observed by comparing the estimated contour lines (dotted) in  the top and bottom panels. 
This is particularly true for  a small sample size like $n=20$, the corrected version providing smoother  and more realistic contour lines.
\begin{figure}[h!]

\emph{\hspace{40mm}  $n=20$  \hspace{30mm} $n=50$   \hspace{30mm} $n=100$}\\
\begin{minipage}{.10\textwidth}
     \begin{center}
     	\emph{MD}
     \end{center}
\end{minipage}
\begin{minipage}{.90\textwidth}
	\centering
	\includegraphics[width=0.30\textwidth]{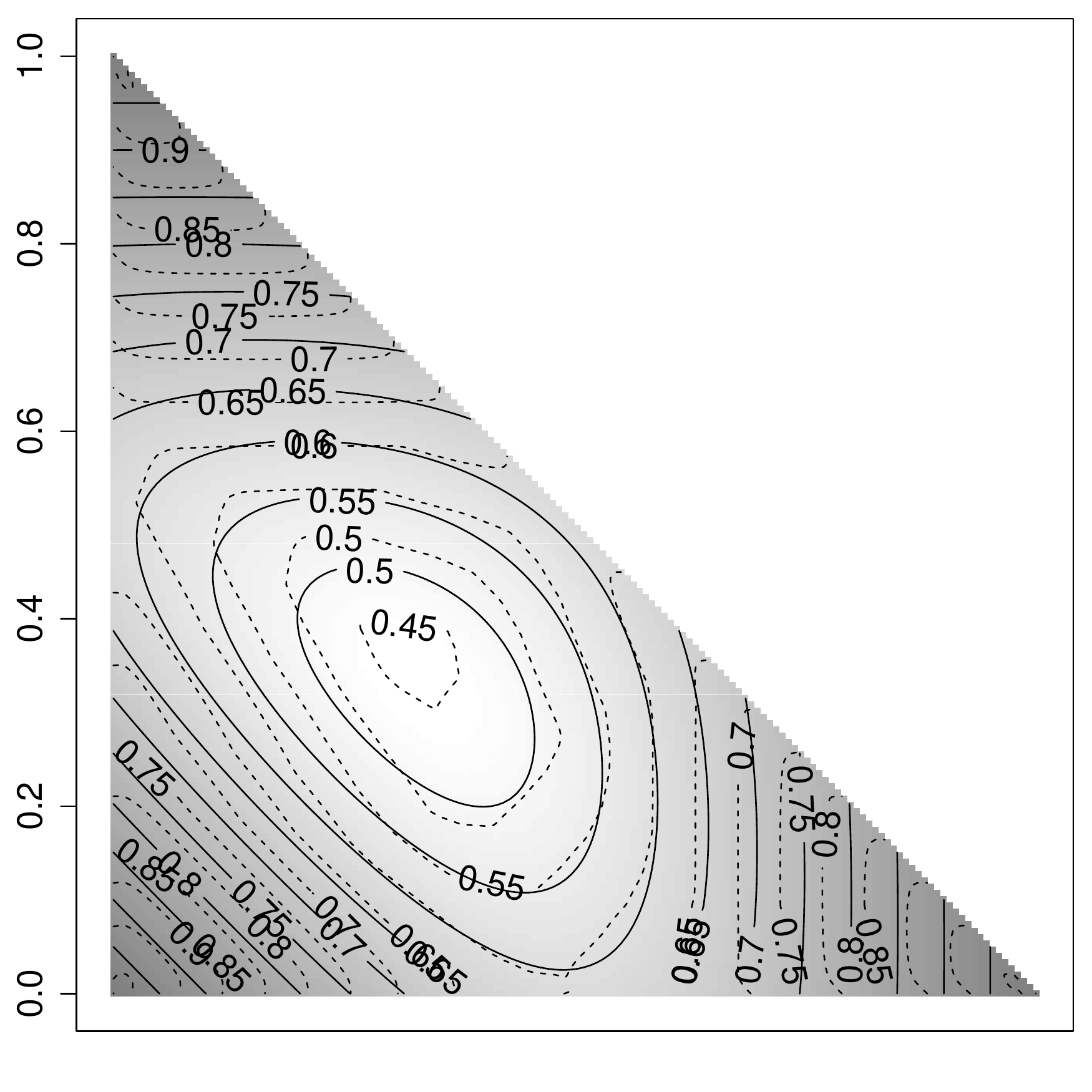}
	\includegraphics[width=0.30\textwidth]{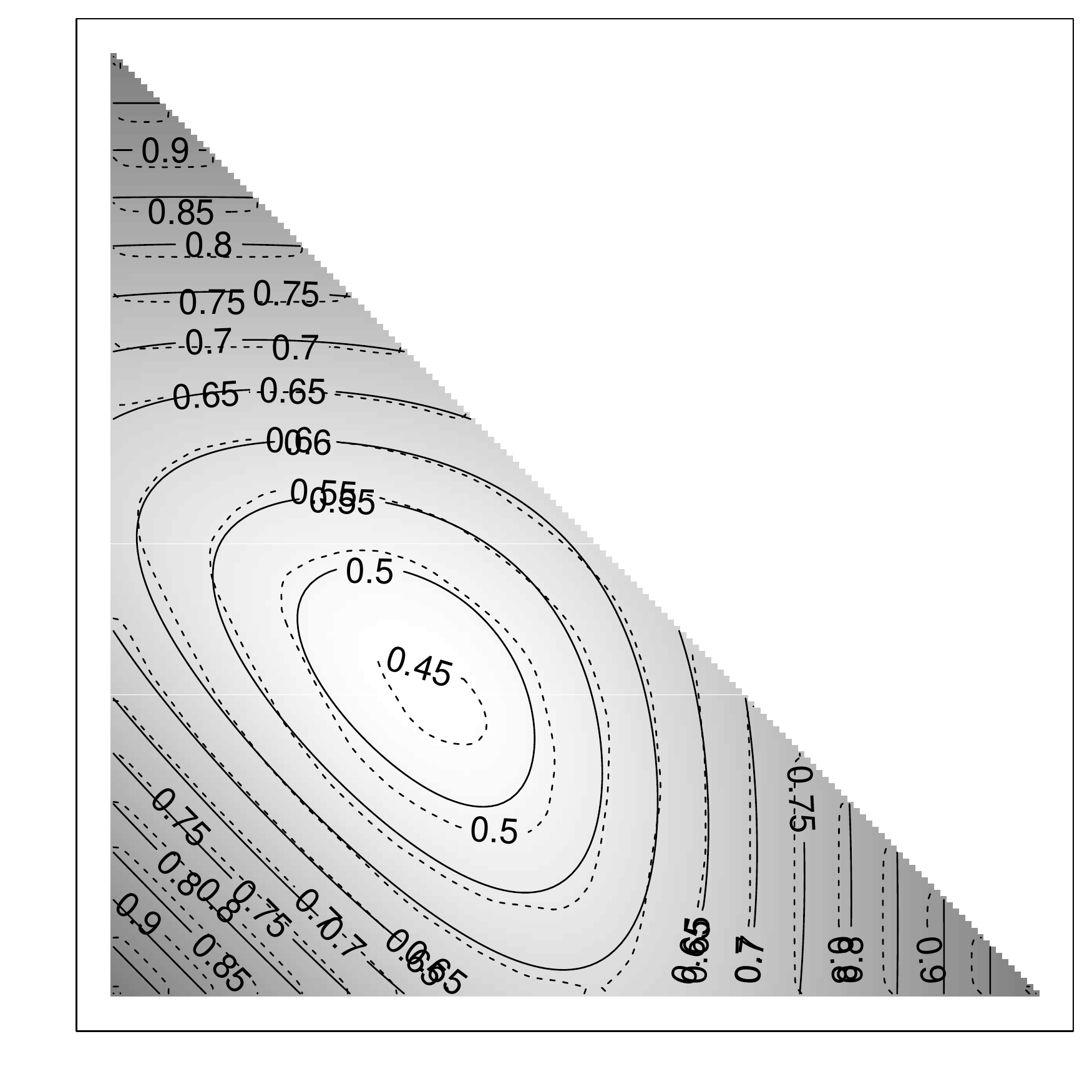}
	\includegraphics[width=0.30\textwidth]{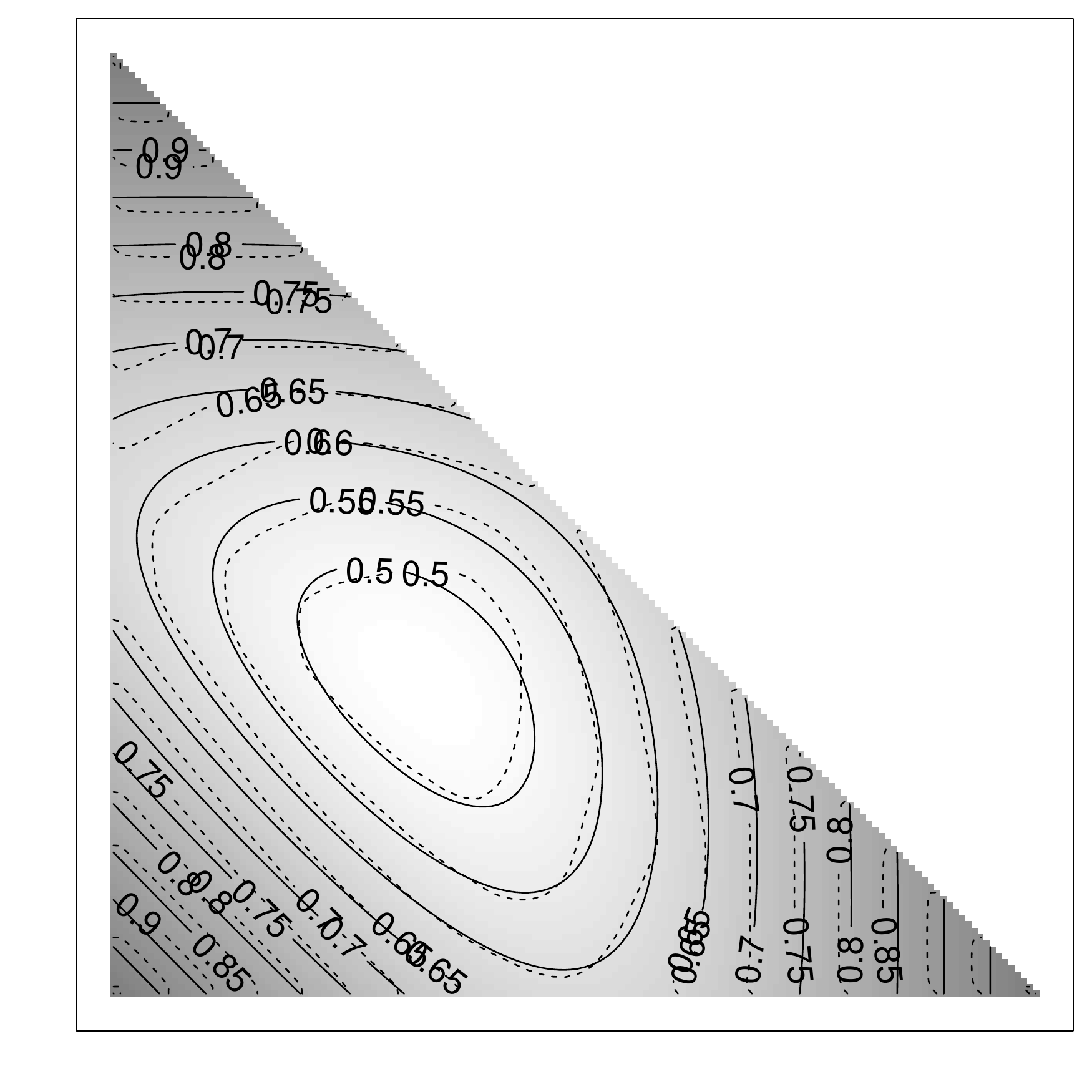}\\
\end{minipage}
\begin{minipage}{.10\textwidth}
\begin{center}
		\emph{BP-MD}
\end{center}
\end{minipage}
\begin{minipage}{.90\textwidth}
	\centering
	\includegraphics[width=0.30\textwidth]{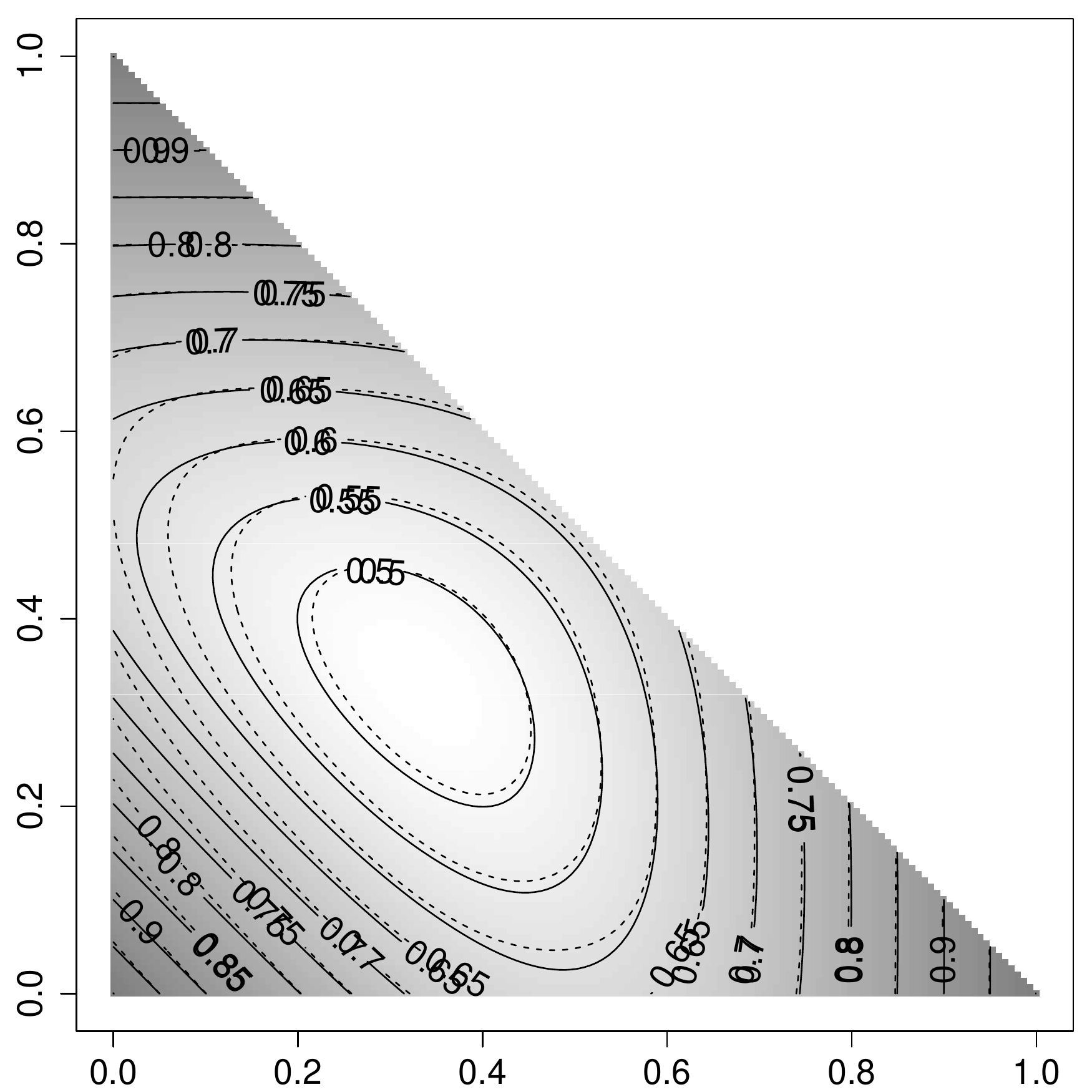}
	\includegraphics[width=0.30\textwidth]{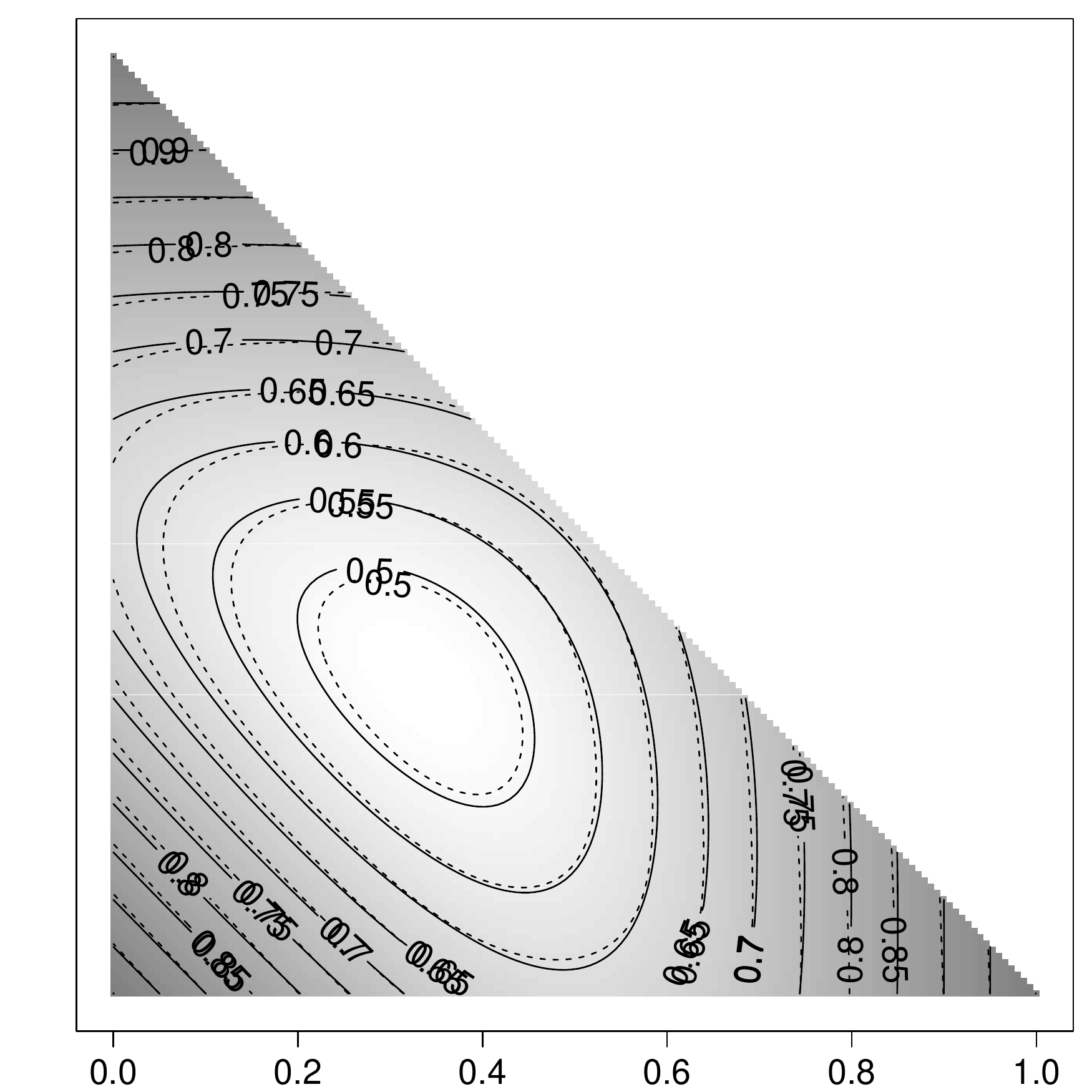}
	\includegraphics[width=0.30\textwidth]{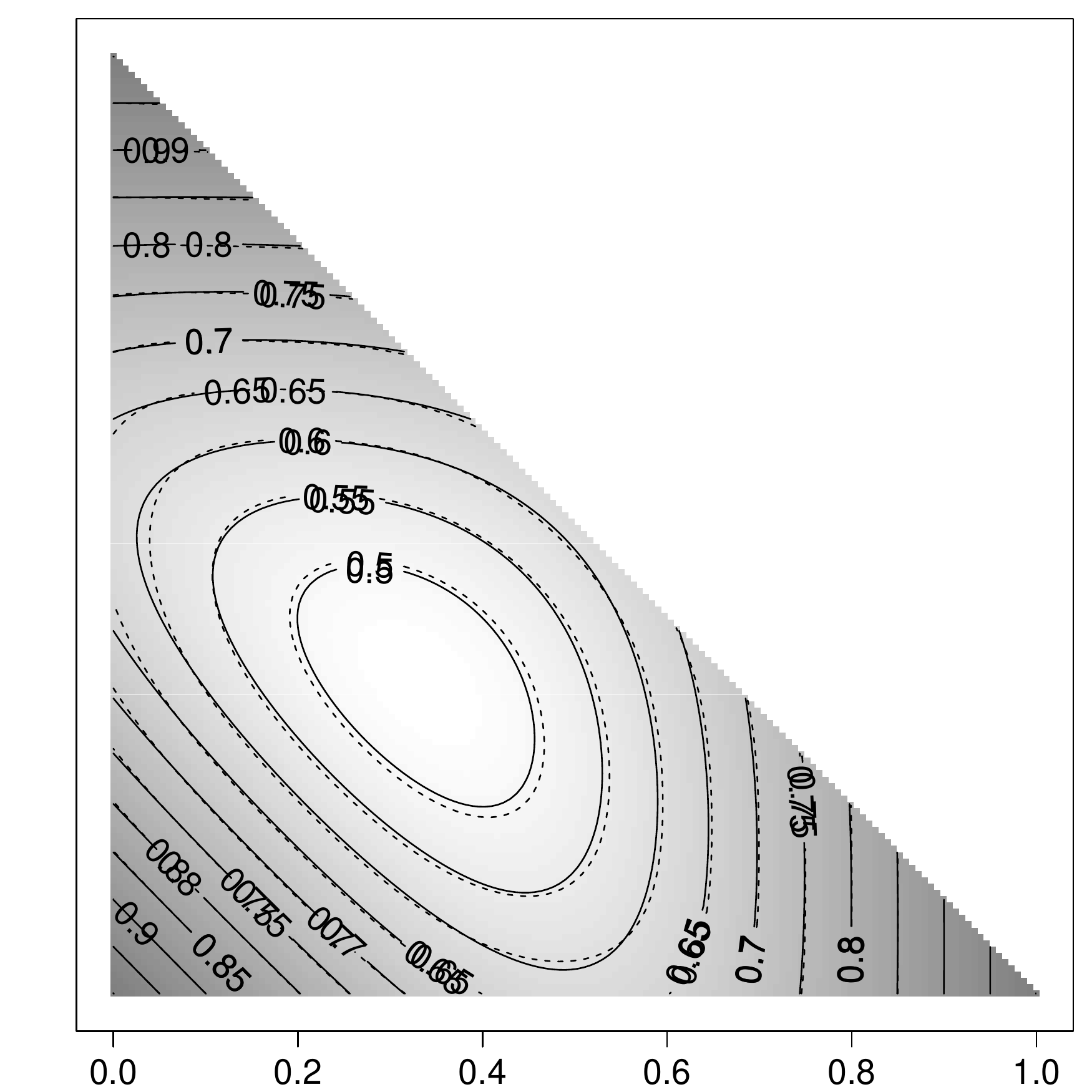}\\
\end{minipage}
\caption{
Estimates (dashed lines) of the Pickands dependence function obtained with the MD estimator (top row) and  
its BP version (bottom row) with polynomial degree $k=14$. 
The solid line is the true Pickands dependence function.
Each column represents a different sample size.
}
\label{fig:comparison}
\end{figure}

To guarantee a good approximation of $A$  with $\widetilde{A}_{n,k}$, Proposition \ref{prop:prop_bernproj} suggested  to set  a large polynomial degree $k$ for large sample sizes, see also \shortciteN{fil+g+s08}, \citeN{gudend+s11}, \citeN{gudend+s12}. 
But  computational time limits   restrict  the choice of $k$. 
Figure \ref{fig:comparisonK} explores this issue for the logistic model with $\alpha'=0.3$ and    $n=100$. 
As expected from the theory, the choice of $k$ is not anecdotical. A shift in the contour lines appears for the small value    $k=5$, see the left panel of Figure \ref{fig:comparisonK}.  
This undesirable feature disappears for a moderate value of $k$, see the right panel with $k=14$.
%
\begin{figure}[h!]
\emph{\hspace{40mm}  $k=5$  \hspace{33mm} $k=11$   \hspace{30mm} $k=14$}\\
\begin{minipage}{.10\textwidth}
	\begin{center}
		\emph{BP-MD}
	\end{center}
\end{minipage}
\begin{minipage}{.90\textwidth}
	\centering
	\includegraphics[width=0.30\textwidth]{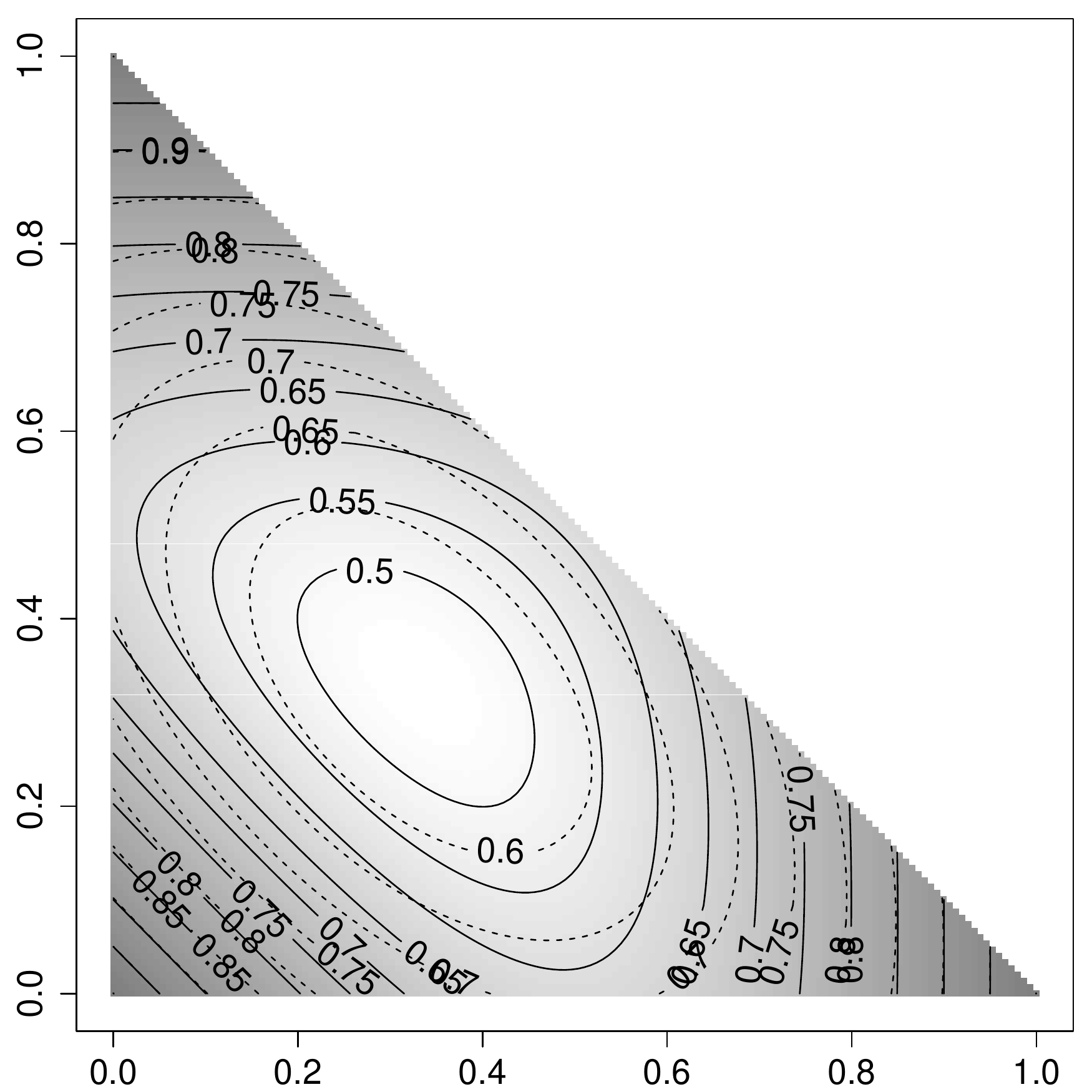}
	\includegraphics[width=0.30\textwidth]{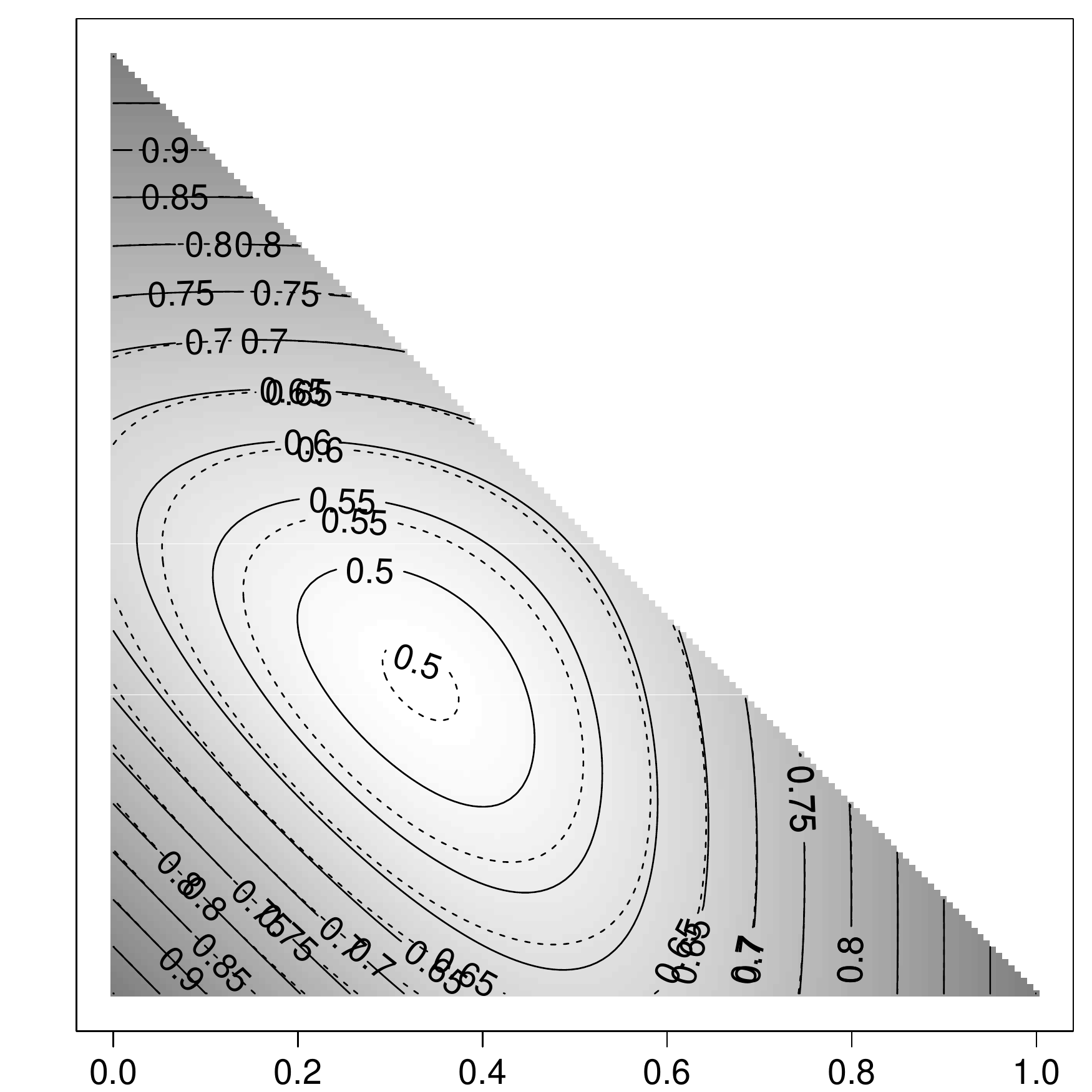}
	\includegraphics[width=0.30\textwidth]{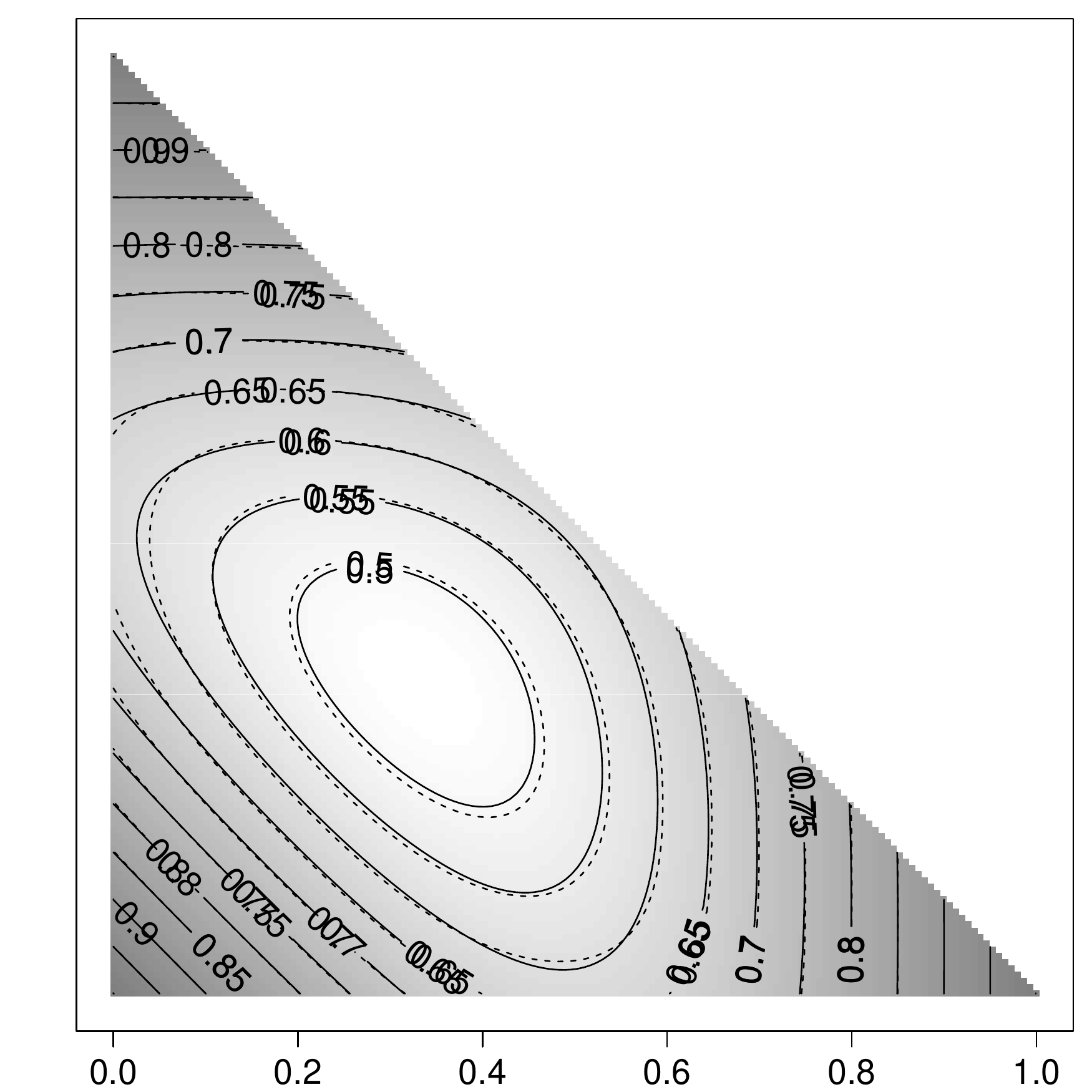}\\
\end{minipage}
\caption{Same as Figure \ref{fig:comparison} but with $n=100$ and 
three different values of the polynomial degree.
}
\label{fig:comparisonK}
\end{figure}

To go beyond these visual checks, we also compute a Monte-Carlo approximation of the mean integrated squared error  
\begin{equation*}
\label{eq:mise}
\mbox{MISE}(\widehat{A}_n,A) = \expect\left\{ \int_{\simp} \left(\widehat{A}_n(\bw)-A(\bw)\right)^2 \diff\bw\right\}, 
\end{equation*}
for a variety of setups.  The approximate MISE is obtained by repeating $1000$ times 
a given inference method for three different sample sizes $n=50,100, 200$.
%
Different dependence strength of the logistic model has been explored setting the parameter $\alpha'$
between $0.3$ (strong dependence) and 1 (independence).
Table \ref{tab:MISE} compares four non-parametric estimators introduced in Section \ref{sec:nonparest}: the madogram estimator (MD), the \citeN{pickands81} estimator (P), the multivariate version of the \citeN{hall+t00} estimator (HT), and finally the multivariate extension of the \citeN{cap+f+g97} estimator.
For comparison purposes we have also considered the weighted and endpoint-corrected versions of the P and CFG estimators as discussed in \citeN{gudend+s12}, denoted by Pw and CFGw respectively. 
%
%
%
\begin{table}[t!]
\begin{center}
{\footnotesize \begin{tabular}{ccccccc}
\toprule
Sample size $n$& Estimator & \multicolumn{5}{c}{Parameter $\alpha'$}\\
& & $0.3$ & $0.5$ & $0.7$ & $0.9$ &$1$\\
\midrule
$50$ 
 & P  & $ 4.25\times 10^{-4} $ & $ 8.06\times 10^{-4} $ & $ 1.47\times 10^{-3} $ & $ 2.45\times 10^{-3} $ & $ 2.50\times 10^{-3} $\\
 & Pw  & $ 1.45\times 10^{-4} $ & $ 5.13\times 10^{-4} $ & $ 1.26\times 10^{-3} $ & $ 2.53\times 10^{-3} $ & $ 2.81\times 10^{-3} $\\
 & CFG & $ 2.36\times 10^{-4} $ & $ 6.92\times 10^{-4} $ & $ 1.87\times 10^{-3} $ & $ 4.07\times 10^{-3} $ & $ 5.02\times 10^{-3} $\\
 & CFGw & $ 9.17\times 10^{-5} $ & $ 4.45\times 10^{-4} $ & $ 1.24\times 10^{-3} $ & $ 2.66\times 10^{-3} $ & $ 3.07\times 10^{-3} $\\
 & HT & $ 2.64\times 10^{-4} $ & $ 8.54\times 10^{-4} $ & $ 2.59\times 10^{-3} $ & $ 5.13\times 10^{-3} $ & $ 5.65\times 10^{-3} $\\
 & MD & $ 1.80\times 10^{-4} $ & $ 8.66\times 10^{-4} $ & $ 1.91\times 10^{-3} $ & $ 3.02\times 10^{-3} $ & $ 2.87\times 10^{-3} $\\
 \midrule
$100$ 
 & P & $ 1.53\times 10^{-4} $ & $ 3.16\times 10^{-4} $ & $ 6.98\times 10^{-4} $ & $ 1.20\times 10^{-3} $ & $ 1.39\times 10^{-3} $\\
 & Pw & $ 6.36\times 10^{-5} $ & $ 2.38\times 10^{-4} $ & $ 6.51\times 10^{-4} $ & $ 1.25\times 10^{-3} $ & $ 1.51\times 10^{-3} $\\
 & CFG & $9.54\times 10^{-5} $ & $3.27\times 10^{-4} $ & $8.66\times 10^{-4} $ & $1.78\times 10^{-3} $ & $2.15\times 10^{-3} $\\
 & CFGw & $4.32\times 10^{-5} $ & $2.21\times 10^{-4} $ & $6.35\times 10^{-4} $ & $1.24\times 10^{-3} $ & $1.39\times 10^{-3} $\\
 & HT & $ 2.61\times 10^{-4} $ & $ 7.66\times 10^{-4} $ & $ 2.16\times 10^{-3} $ & $ 4.24\times 10^{-3} $ & $ 5.27\times 10^{-3} $\\
 & MD & $ 7.02\times 10^{-5} $ & $ 3.18\times 10^{-4} $ & $ 7.91\times 10^{-4} $ & $ 1.19\times 10^{-3} $ & $ 1.09\times 10^{-3} $\\
\midrule
$200$ 
 & P & $ 5.87\times 10^{-5} $ & $1.54\times 10^{-4} $ & $ 3.40\times 10^{-4} $ & $ 6.25\times 10^{-4} $ & $ 7.24\times 10^{-4} $\\
 & Pw & $ 3.01\times 10^{-5} $ & $1.31\times 10^{-4} $ & $ 3.28\times 10^{-4} $ & $ 6.60\times 10^{-4} $ & $ 7.59\times 10^{-4} $\\
 & CFG & $ 3.87\times 10^{-5} $ & $ 1.58\times 10^{-4} $ & $ 4.00\times 10^{-4} $ & $ 8.31\times 10^{-4} $ & $ 8.52\times 10^{-4} $\\
 & CFGw & $ 2.12\times 10^{-5} $ & $ 1.23\times 10^{-4} $ & $ 3.24\times 10^{-4} $ & $ 6.36\times 10^{-4} $ & $ 5.90\times 10^{-4} $\\
 & HT & $ 2.55\times 10^{-4} $ & $ 7.31\times 10^{-4} $ & $ 2.05\times 10^{-3} $ & $ 3.82\times 10^{-3} $ & $ 5.85\times 10^{-3} $\\
 & MD & $ 3.17\times 10^{-5} $ & $ 1.58\times 10^{-4} $ & $ 3.70\times 10^{-4} $ & $ 5.81\times 10^{-4} $ & $ 4.91\times 10^{-4} $\\
\bottomrule
\end{tabular}}
\caption{MISE of four estimators of the Pickands dependence function, and some weighted version, based on a trivariate symmetric
logistic dependence model for different parameter values and sample sizes.} 
\label{tab:MISE}
\end{center}
\end{table}
%
%
We can see that the MD estimator provided the best results if compared with the other classical non-parametric estimators.
Taking into account also the weighted versions, it turns out that the CFGw estimator performs the best,  
especially for small sample sizes ($n=50$). With a medium dependence ($\alpha'=0.5, 0.7$) the estimators
provide similar results. With a weak dependence or in the independence case ($\alpha'=0.9,1$), the MD estimator 
still provides the best results, especially for small and moderate sample sizes ($n=50,100$).
%
%
%
\begin{table}[t!]
\begin{center}
{\footnotesize \begin{tabular}{ccc}
\toprule
& \multicolumn{2}{c}{Projection method}\\
& Bernstein-B\'ezier & Discrete spectral measure\\
\midrule
& \multicolumn{2}{c}{\% Improvement}\\ 
\begin{tabular}{ccc}
&\\
$n$ & $\alpha'$ & $k$\\
\midrule
50 & 0.3 & 23\\
     & 0.5 & 20\\
     & 0.7 & 16\\
     & 0.9 & 6\\
     & 1 & 3\\
\midrule
100 & 0.3 & 23\\
     & 0.5 & 20\\
     & 0.7 & 16\\
     & 0.9 & 6\\
     & 1 & 3\\
\midrule
200 & 0.3 & 23\\
     & 0.5 & 20\\
     & 0.7 & 16\\
     & 0.9 & 6\\
     & 1 & 3\\
\end{tabular}
&
\begin{tabular}{cccccc}
\multicolumn{4}{c}{Estimator}\\
P & Pw& CFG & CFGw & HT & MD\\
\midrule
18.11 & 13.34  & 76.84& 18.97 & 51.53 & 8.50\\
8.19 & 5.44 & 13.98 & 1.46 & 12.52 & 2.22\\
15.60 & 11.01 & 4.43&  2.10 & 9.05 & 6.48\\
44.70 & 25.92 & 3.98 & 6.51 & 16.93 & 48.72\\
69.95 & 34.53 & 4.92 & 9.04 & 34.68 & 93.60\\
\midrule
16.59 & 13.36 & 59.75 & 13.43 & 45.45 & 7.41\\
5.85 & 3.83 & 7.59 & 0.63 & 9.78 & 1.23\\
9.89 & 8.15 & 2.21 & 0.95 & 6.42 & 2.74\\
34.95 & 23.98 & 3.48 & 6.50 & 8.33 & 26.72\\
68.00 & 39.35 & 5.93 & 11.50 & 19.22 & 87.46\\
\midrule
15.16 & 10.63 & 37.73 & 5.66 & 44.72 & 5.05\\
3.06 & 2.51 & 3.80 & 0.41 & 9.06 & 0.13\\
5.70 & 5.22 & 0.90 & 0 & 5.60 & 0.76\\
25.22 & 20.48 & 3.43 & 6.07 & 5.53 & 13.39\\
69.17 & 46.32 & 8.63 & 16.06 & 10.88 & 81.99\\
\end{tabular}
&
\begin{tabular}{cc}
\multicolumn{2}{c}{Estimator}\\
Pw & CFGw\\
\midrule
2.14 & 0.82\\
5.51 & 1.17\\
11.03 & 3.17\\
22.39 & 4.37\\
29.07 & 4.89\\
\midrule
1.27 & 0.40\\
3.52 & 0.84\\
7.51 & 1.37\\
23.13 & 4.07\\
36.10 & 8.11\\
\midrule
0.60 & 0\\
2.10 & 0\\
4.85 & 0.88\\
18.52 & 3.28\\
40.99 & 11.89\\
\end{tabular}
\\
\bottomrule
\end{tabular}}
\caption{Percentage improvement of the MISE gained with the projection method.} 
\label{tab:improve}
\end{center}
\end{table}
%

Table \ref{tab:improve} shows how an initial estimate of the Pickands dependence function improves using the projection method.
The improvement is computed by 
$$
\frac{MISE_N - MISE_P}{MISE_N} \times 100,
$$
and is reported in columns 3--6, 
where $MISE_N$ and $MISE_P$ are the MISE obtained with a non-parametric estimator and its projection respectively. 
As before, MISE provides a Monte-Carlo approximation of $\mbox{MISE}(\widehat{A}_n,A)$ 
obtained with 1000 random samples. The true dependence
structure is still the symmetric logistic model. $\alpha'$ denotes the model parameter, and $n$ and $k$ are the sample size and the
polynomial degree respectively. 
Estimates obtained with the initial non-parametric are regularized
using the BP method. The order of the
polynomial exploited is an ``optimal" value of $k$, that is the $k$ value chosen in
a such way that the MISE does not decrease significantly for larger values of $k$. It turns out that with a weak dependence
a small value of $k$ is enough, conversely with a strong dependence a large value of $k$ is needed.
This makes sense if we view a dependence structure as an added  complexity, especially with respect to the independence case, the simplest possible model. 
In such a framework, the polynomial degree has to be higher to capture this extra information. 
The improvements obtained with the classical estimators, sorted from largest to smallest, are: MD, CFG, P and HT. 
As expected, with Pw and CFGw the improvements are the smallest. For each
estimator, the improvements sorted from largest to smallest, are obtained with: independence ($\alpha'=1$), strong dependence ($\alpha'=0.3$), weak dependence ($\alpha'=0.9$) and medium dependence ($\alpha'=0.5,0.7$). 
These results are compared with those provided in \citeN{gudend+s12} that are obtained
with the discrete spectral measure projection method proposed by the same authors (see columns 7,8).
We can conclude that overall the BP method provides a better percentage improvement. 

To explore the validity of our procedure to derive a bootstrap pointwise and simultaneous $(1-\tilde{\alpha})$ 
confidence band described in Section \ref{subsec:comp},  
Table \ref{tab:covprob} displays $95\%$ coverage probabilities from 1000 independent samples and   $r=500$ bootstrap resampling. 
The  parametric  setup is identical to the one used in Table \ref{tab:improve} but with fixed sample size equal to $n=100$.
Overall, excluded the independence case, 
%
the simultaneous method \eqref{eq:confidentIntervals} outperforms the pointwise method, since the coverage probabilities are always larger.

%
%
%
\begin{table}[t!]
\begin{center}
{\footnotesize \begin{tabular}{llccccc}
\toprule
Estimator & Confident bands' type & \multicolumn{5}{c}{Parameter $\alpha'$}\\
& & $0.3$ & $0.5$ & $0.7$ & $0.9$ &$1$\\
\midrule
BP-P & Pointwise     & 41.53 & 35.92 & 50.66 & 72.23 & 83.05 \\
& Simultaneous       & 73.34 & 69.13 & 68.79 & 75.11 & 84.82 \\
\midrule
BP-CFG & Pointwise  & 26.89 & 42.65 & 42.60 & 57.68 & 57.30 \\
& Simultaneous        & 62.24 & 61.92 & 60.67 & 66.54 & 57.32\\
\midrule
BP-HT & Pointwise   & 29.33 & 28.92 & 49.38 & 65.20 & 10.42 \\
& Simultaneous       & 51.26 & 54.22 & 60.91 & 81.33 & 10.68 \\
\midrule
BP-MD & Pointwise  & 54.63 & 70.15 & 66.13 & 73.43 & 94.63 \\
& Simultaneous       & 76.40 & 80.48 & 80.26 & 81.36 & 94.65 \\ 
\bottomrule
\end{tabular}}
\caption{$95\%$ coverage  probabilities of the BP method with four non-parametric estimators for  the symmetric logistic model.} 
\label{tab:covprob}
\end{center}
\end{table}
%
 \begin{figure}[t!]
\centering
\includegraphics[width=\textwidth]{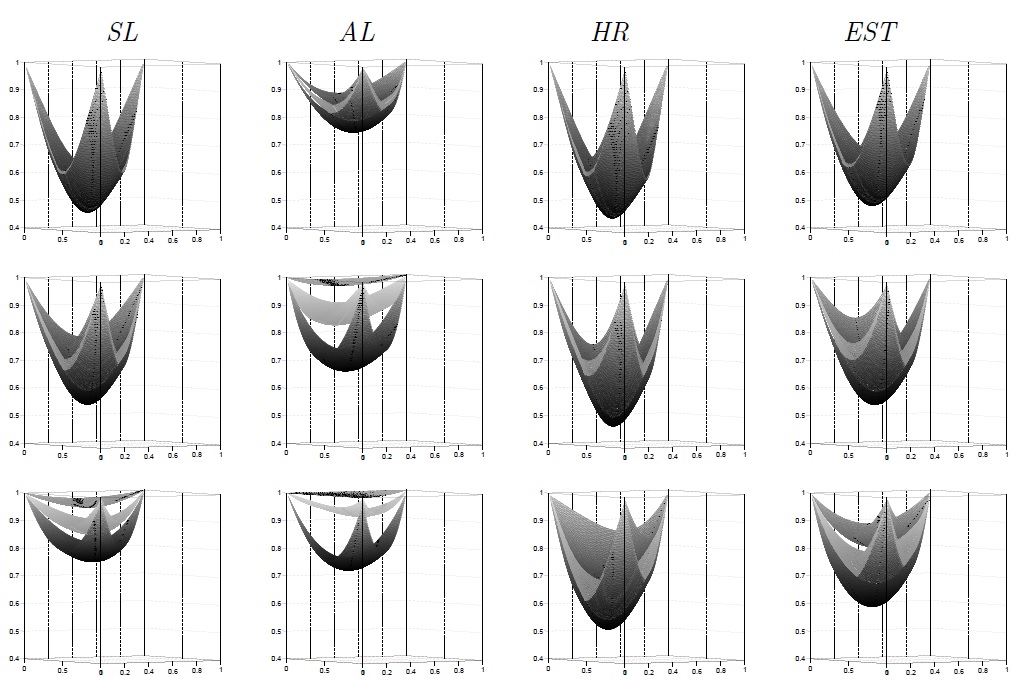}\\
\caption{{ Estimates of  Pickands dependence function for $d=3$ (light grey shade) and bootstrap variability bands (dark grey shade)
for the SL, AL, HR, EST (left-right) models with strong, mild and weak dependence (top-bottom)}}
\label{fig:triv}
\end{figure}
%

To close this small simulation study\footnote{The case $d=2$ has also been considered. The results have been omitted for brevity, since they arrive at the same conclusion. Tables like Table \ref{tab:MISE}, \ref{tab:improve} and 
 \ref{tab:covprob} are available upon request for the HR and EST families and brings the same overall  message.}, 
we extend the class of parametric families to 
the asymmetric logistic (AL, \citeNP{tawn90}) 
with  $\theta=0.6$, $\phi=0.3$, $\psi=0$, 
 the  H\"{u}sler--Reiss model (HR, \citeNP{husler+r89})
  with three cases  ($\gamma_1=0.8,\gamma_2=0.3,\gamma_3=0.7$), ($\gamma_1=0.49, \gamma_2=0.51, \gamma_3=0.03$),
($\gamma_1=0.24,\gamma_2=0.23,\gamma_3=0.11$)   and the  extremal skew-$t$ (EST, \citeNP{padoan11}) with  three setups ($\alpha^{*}=7,-10,1$, $\nu=3$, $\omega=0.9$), 
($\alpha^{*}=-2,9,-15$, $\nu=2$, $\omega=0.9$), ($\alpha^{*}=-0.5,-0.5,-0.5$, $\nu=3$, $\omega=0.9$).
Figure \ref{fig:triv} shows that, for all these cases, the lower and upper limits of the variability bands are always convex functions and
they always contain the true Pickands dependence function. 
The variability bands of weaker  dependence structures are
typically wider than those of stronger dependence structures. 
The same is true for asymmetric versus symmetric dependence structures.

\section{Weekly maxima of hourly rainfall in France}\label{sec:data}

Coming back to Figure \ref{fig:map} introduced in Section \ref{sec:intro}, 
our goal here is to measure the dependence within each cluster of size $d=7$.
The clusters were obtained by running the algorithm proposed by \shortciteN{bernard13} on  
  weekly maxima of hourly rainfall  recorded in the Fall season from 1993 to 2011, i.e., $n=228$ for each station.
In the first place, the aim of clustering was to describe the dependence of locations, with homogeneous climatology characteristics within a cluster and heterogeneous characteristics between clusters.
Climatologically, extreme precipitation  that affects the Mediterranean coast in the fall is caused by the interaction of southern  
and mountains winds coming from the Pyr\'{e}n\'{e}es, C\'{e}vennes and Alps regions. 
In the north of France, heavy rainfall is often produced by
mid-latitude perturbations in Brittany or  in the north of France and Paris. 
It can be checked that extremes within clusters are indeed strongly dependent.

For each cluster, we compute our Bernstein projection estimator based on  the  madogram
and fixed the polynomial's order $k$ equal to 7.
To summarize this seven-dimensional dependence structure, we take advantage of the \emph{extremal coefficient} \cite{smith90} defined by 
$$
\theta=d\,A(1/d,\ldots,1/d).
$$ 
It  satisfies the condition
$1\leq\theta\leq d$, where the lower and upper bounds represent the cases of complete dependence and independence
among the extremes,  respectively. In each cluster, the extremal coefficient is estimated using 
the equation $\widehat{\theta}~=~7\,\widetilde{A}^{\MD}_{n,k}(1/7,\ldots,1/7)$, so that  $\widehat{\theta}$ always belongs to the interval $[1,7]$.
The range of the estimated coefficients is between $3.5$, indicating strong dependence, and $4.6$, indicating medium dependence. 

%
\label{fig:biv-extrindBiv}
\begin{figure}[t]
	\centering
	\includegraphics[width=0.45\textwidth, page=1]{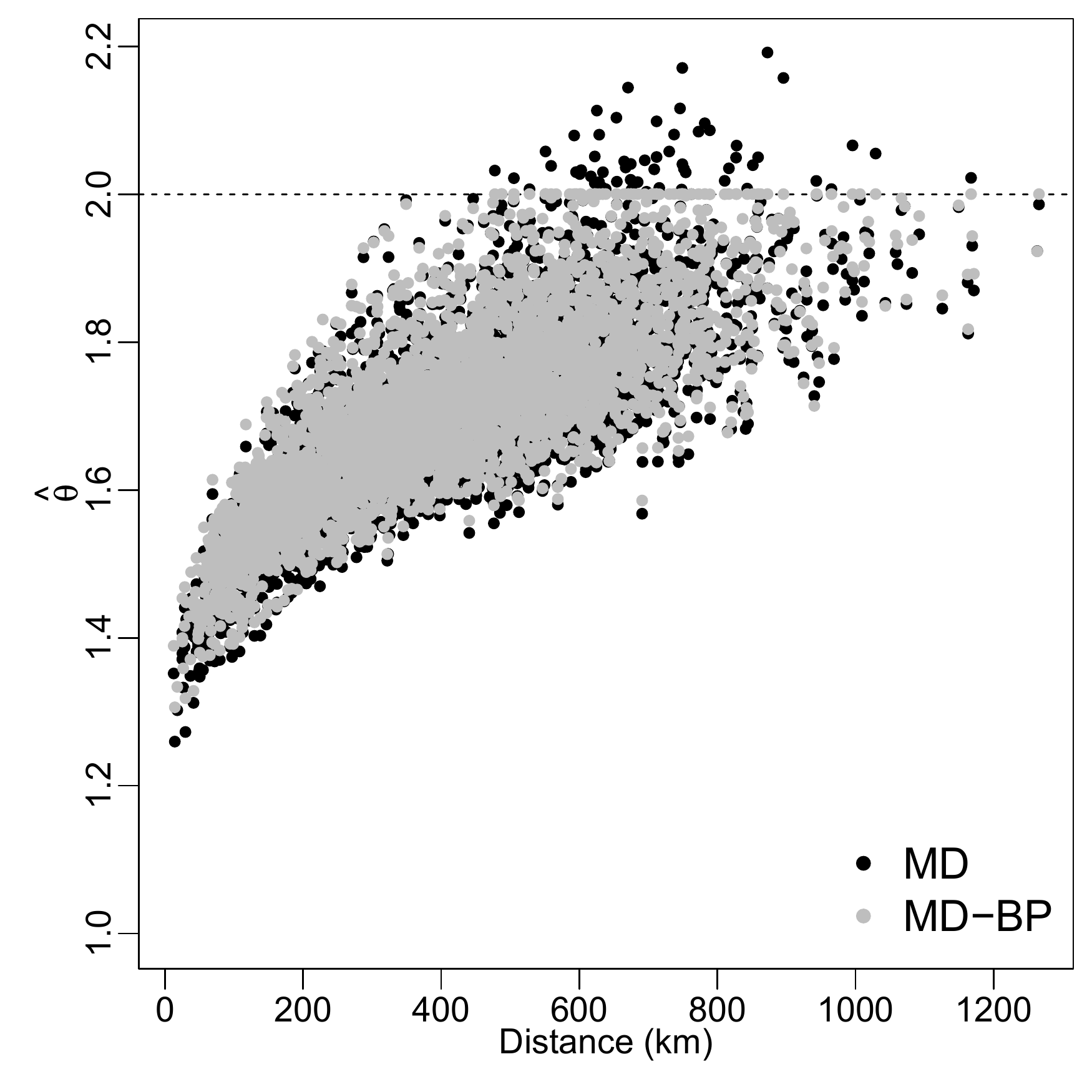} \hspace{.2cm}
	\includegraphics[width=0.45\textwidth, page=2]{Fig5-1}
	\caption{French precipitation data. Left: pairwise extremal coefficients as a function of distance between weather stations. Right: estimates of Pickands dependence functions for four pairs of stations at decreasing distances (black: raw madogram estimator; gray: Bernstein projection madogram estimator).}
\end{figure}
%

As climatologically expected, we can detect in~Figure \ref{fig:map} a  latitudinal gradient in the estimated extremal coefficients. 
They  are smaller in the northern regions and higher in the south. This can be explained by westerly fronts  above $46^\circ$ latitude that affect large regions, whereas  extreme precipitation in the south is more likely to be driven by localised convective storms with  weak spatial dependence structures.
Finally, in the center of the country, away from the coasts, there is the highest degree of dependence among extremes, as they are the result of the meeting between different densities of air masses.

For all possible pairs of locations we have estimated the bivariate Pickands dependence function using the madogram estimator and its Bernstein projection.
The left-hand panel of Figure \ref{fig:biv-extrindBiv} shows the pairwise extremal coefficients versus the Euclidean distance between sites, computed through the estimated Pickands dependence functions. We have
$\widehat{\theta}\leq 1.5$ for the locations that are less than 200 km far apart, meaning that the extremes are strongly or at least mildly dependent, while for sites more than 200 km far apart, we have $\widehat{\theta}>1.5$, meaning that the extremes at most mildly dependent up to independent.
The graph also shows the benefits of the projection method: after projection, the extremal coefficients fall within the admissible range $[1,2]$, whereas they can be larger than $2$ without the projection method.

The right-hand plot of Figure~\ref{fig:biv-extrindBiv} shows four examples of estimated Pickands dependence functions obtained with pairs of sites whose distances are 979.8, 505.9, 390.1 and 158.1 km, respectively (top-left to bottom-right panels). 
The madogram estimator provides estimates (black lines) that are not convex functions and hence are not Pickands dependence functions themselves.
Contrarily, the estimates (gray lines) obtained with the projection estimator are valid Pickands dependence functions.

\section{Computational Details}\label{sec:comp}

Simulations and data analysis were performed using the \textsf{R} package \texttt{ExtremalDep} (\url{https://r-forge.r-project.org/R/?group\_id=1998})

\section*{Acknowledgements}
We thank three anonymous Referees for their valuable suggestions that have improved the presentation of this paper.
We also thank Catia Scricciolo, Sabrina Vettori and Sonia Petrone for their valuable support.
The work of Philippe Naveau has been supported by ANR-DADA, LEFE-INSU-Multirisk, AMERISKA, A2C2, and Extremoscope projects
and was completed  during his visit at the IMAGE-NCAR group in Boulder, CO, USA. 
The work of Johan Segers was funded by contract ``Projet d'Act\-ions de Re\-cher\-che Concert\'ees'' No.\ 12/17-045 of the ``Communaut\'e fran\c{c}aise de Belgique'' and by IAP research network Grant P7/06 of the Belgian government (Belgian Science Policy).
The authors acknowledge Meteo France for the precipitation time series. 
The authors would also very much like to credit the contributors to the \textsf{R} project and \LaTeX. 

\appendix

\section{Proofs}\label{sec:appA}
For $\bw \in \simp$, define the function $\nu_{\bw}: [0,1]^d\rightarrow [0,1]$ by
\begin{equation}\label{eq:mado_function}
\nu_{\bw}(\bu)=\bigvee_{i=1}^d(u_i^{1/w_i})-\frac{1}{d}\sum_{i=1}^{d} u_i^{1/w_i}, \quad \bu\in[0,1]^d,
\end{equation}
where, by convention, $u^{1/w} = 0$ whenever $w = 0$ and $u \in [0, 1]$.

\begin{lem}
\label{lem:int}
For any cumulative distribution function $H$ on $[0, 1]^d$ and for any $\bw \in \simp$, we have
\[
  \int_{[0, 1]^d} \nu_{\bw}(\bu) \, \diff H(\bu)
  =
  \frac{1}{d} \sum_{i=1}^d \int_0^1 H(1, \ldots, 1, x^{w_i}, 1, \ldots, 1) \, \diff x
  -
  \int_0^1 H(x^{w_1}, \ldots, x^{w_d}) \, \diff x.
\]
\end{lem}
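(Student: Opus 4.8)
The plan is to read the left-hand side as an expectation. Let $\bU = (U_1, \ldots, U_d)$ be a random vector on $[0,1]^d$ with distribution function $H$, so that $\int_{[0,1]^d} \nu_{\bw}(\bu) \, \diff H(\bu) = \expect[\nu_{\bw}(\bU)] = \expect\bigl[\bigvee_{i=1}^d U_i^{1/w_i}\bigr] - \tfrac1d \sum_{i=1}^d \expect[U_i^{1/w_i}]$. Since each $U_i^{1/w_i}$ and their maximum are random variables taking values in $[0,1]$, I would apply the layer-cake identity $\expect[Y] = \int_0^1 \prob(Y > t) \, \diff t = 1 - \int_0^1 \prob(Y \le t) \, \diff t$ to each of them, which reduces the problem to identifying their distribution functions.

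The key step is to translate these distribution functions into evaluations of $H$. For $t \in (0,1)$ and an index $i$ with $w_i > 0$, raising to the power $w_i$ is strictly increasing on $[0,1]$, so $\{U_i^{1/w_i} \le t\} = \{U_i \le t^{w_i}\}$ and hence $\prob(U_i^{1/w_i} \le t) = H(1, \ldots, 1, t^{w_i}, 1, \ldots, 1)$, the $i$-th univariate margin of $H$ at $t^{w_i}$; likewise $\bigl\{\bigvee_i U_i^{1/w_i} \le t\bigr\} = \bigcap_i \{U_i \le t^{w_i}\}$, so $\prob\bigl(\bigvee_i U_i^{1/w_i} \le t\bigr) = H(t^{w_1}, \ldots, t^{w_d})$. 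When $w_i = 0$ the convention $u^{1/w_i} = 0$ together with $t^{0} = 1$ keeps both identities valid: the corresponding factor $H(1,\ldots,1,t^{w_i},1,\ldots,1)$ collapses to $H(1,\ldots,1) = 1$, matching $\prob(0 \le t) = 1$, and an index with $w_i = 0$ simply drops out of the maximum. Substituting into the layer-cake formula yields
\[
  \expect\Bigl[\bigvee_{i=1}^d U_i^{1/w_i}\Bigr] = 1 - \int_0^1 H(t^{w_1}, \ldots, t^{w_d}) \, \diff t,
  \qquad
  \expect[U_i^{1/w_i}] = 1 - \int_0^1 H(1, \ldots, 1, t^{w_i}, 1, \ldots, 1) \, \diff t .
\]

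Finally I would assemble the two displays. Plugging them into $\expect[\nu_{\bw}(\bU)] = \expect\bigl[\bigvee_i U_i^{1/w_i}\bigr] - \tfrac1d\sum_{i=1}^d \expect[U_i^{1/w_i}]$, the constant terms cancel because $\tfrac1d \sum_{i=1}^d 1 = 1$, leaving precisely the right-hand side of the lemma. The only technical point is Fubini's theorem, applied to the bounded, jointly measurable integrand on $[0,1]^d \times [0,1]$ (equivalently, one can bypass probabilistic language altogether and argue directly from the pointwise identities $u_i^{1/w_i} = 1 - \int_0^1 \indic\{u_i \le t^{w_i}\}\,\diff t$ and $\bigvee_i u_i^{1/w_i} = 1 - \int_0^1 \indic\{u_i \le t^{w_i} \text{ for all } i\}\,\diff t$, then integrate in $\diff H(\bu)$ and swap the order of integration). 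There is no real obstacle here; the one place that needs a little care is the bookkeeping of the boundary conventions at $w_i = 0$ and at the endpoints $t \in \{0,1\}$, but those only alter the integrands on sets of Lebesgue measure zero and do not affect any of the integrals.
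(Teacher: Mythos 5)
Your proposal is correct and follows essentially the same route as the paper: both rest on the pointwise identities $u^{1/w} = 1 - \int_0^1 \indic(u \le x^{w})\,\diff x$ (and its analogue for the maximum), followed by integration against $H$ and an interchange of integrals via Fubini, with the same bookkeeping for the $w_i = 0$ convention. The probabilistic layer-cake phrasing is only a cosmetic repackaging of the paper's argument, as you yourself note in your parenthetical remark.
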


\begin{proof}
Fix $\bw\in\simp$. For every $\bu\in [0,1]^d$ we have
\begin{equation*}
\begin{split}
\bigvee_{i=1}^du_i^{1/w_i}
&=1-\int_0^1\indic(\forall i=1,\ldots,d:u_i^{1/w_i}\leq x)dx\\
&=1-\int_0^1\indic(\forall i=1,\ldots,d:u_i\leq x^{w_i})dx
\end{split}
\end{equation*}
and
$$
\frac{1}{d}\sum_{i=1}^d u_i^{1/w_i}=1-\frac{1}{d}\sum_{i=1}^d\int_0^1\indic(u_i\leq x^{w_i})dx.
$$
Subtracting both expressions and integrating over $H$ yields
\begin{equation*}
\begin{split}
  \int_{[0, 1]^d} \nu_{\bw}(\bu) \, \diff H(\bu)
  &= 
  \frac{1}{d} \sum_{i=1}^d \int_{[0, 1]^d} \int_0^1 \indic(u_i \le x^{w_i}) \, \diff x \, \diff H(\bu) \\
  &\qquad \mbox{}
  - \int_{[0, 1]^d} \int_0^1 \indic(\forall i = 1, \ldots, d: u_i \le x^{w_i}) \, \diff x \, \diff H(\bu).
\end{split}
\end{equation*}
Applying Fubini's theorem to both double integrals yields the stated formula.
\end{proof}

\begin{proof}[Proof of Proposition~\ref{prop:multimado}]
The marginal distribution functions being continuous, the copula $C$ is the joint distribution function of the random vector $(F_1(X_1), \ldots, F_d(X_d))$. For $\bw \in \simp$, the multivariate $\bw$-madogram can thus be written as
\[
  \nu(\bw) = \int_{[0, 1]^d} \nu_{\bw}(\bu) \, \diff C(\bu).
\]
Next, apply Lemma~\ref{lem:int}. 
Since $C$ is an extreme-value copula with Pickands dependence function $A$, we find, after some elementary calculations using \eqref{eq:ev_copula} and \eqref{eq:ellA},
\[
  C(x^{w_1}, \ldots, x^{w_d}) = x^{A(\bw)}
\]
for all $x \in (0, 1)$. We obtain
\begin{eqnarray}
\label{eq:C2nu}
  \nu(\bw) 
  &=& \frac{1}{d} \sum_{i=1}^d \int_0^1 C(1, \ldots, 1, x^{w_i}, 1, \ldots, 1) \, \diff x 
  -
  \int_0^1 C(x^{w_1}, \ldots, x^{w_d}) \, \diff x
  \\
\nonumber
  &=& \frac{1}{d} \sum_{i=1}^d \int_0^1 x^{w_i} \, \diff x - \int_0^1 x^{A(\bw)} \, \diff x,
\end{eqnarray}
yielding the first formula stated in the proposition. 
Solve for $A(\bw)$ to obtain \eqref{eq:Amd}. Since $\nu(\bw) + c(\bw) = A(\bw) / (1 + A(\bw))$, necessarily $\nu(\bw) + c(\bw) < 1$, so that the right-hand side of \eqref{eq:Amd} is well-defined.
\end{proof}

\begin{proof}[Proof of Theorem~\ref{prop:prop_multimado}]
The proof proceeds by expressing the statistics and empirical $\bw$-madogram $\nu_n( \bw )$ and $\widehat{\nu}_n( \bw )$ in terms of the empirical distribution and empirical copula and exploiting known results thereon.
For $i = 1, \ldots, d$ and $j = 1, \ldots, n$, let
\begin{equation*}
\begin{split}
\bU_j&= (U_{j,1},\ldots,U_{j,d}),\quad U_{j,i}=F_{i}(X_{j,i}),\\
\widehat{\bU}_j
  &= (\widehat{U}_{j,1},\ldots,\widehat{U}_{j,d}), \quad
  \widehat{U}_{j,i}= F_{n,i}(X_{j,i})
  = \frac{1}{n}\sum_{m=1}^n\indic(X_{m,i}\leq X_{j,i}).
\end{split}
\end{equation*}
Recall $\nu_{\bw}$ in \eqref{eq:mado_function}. The statistics and empirical $\bw$-madogram are equal to
$$
\nu_n( \bw ) = \frac{1}{n} \sum_{m=1}^n \nu_{\bw}(\bU_m)
  = \int_{[0, 1]^d} \nu_{\bw}( \bu ) \, \diff C_n(\bu),\quad
\widehat{\nu}_n( \bw ) =
  \int_{[0, 1]^d} \nu_{\bw}( \bu ) \, \diff \hat{C}_n(\bu),  
$$
respectively, 
where $C_n$ and $\widehat{C}_n$ are the empirical distribution and copula:
$$
C_n(\bu) = \frac{1}{n} \sum_{m=1}^n \indic(\bU_m \le \bu ), \quad
\widehat{C}_n(\bu) = \frac{1}{n} \sum_{m=1}^n \indic( \widehat{\bU}_m \le \bu ),
\qquad \bu \in [0, 1]^d,
$$
(component-wise inequalities).
By Lemma~\ref{lem:int} we obtain
\begin{equation}\label{eq:Cn2nun}
\nu_n(\bw)
=
\frac{1}{d} \sum_{i=1}^d \int_0^1 C_n(1, \ldots, 1, x^{w_i}, 1, \ldots, 1) \, \diff x
-
\int_0^1 C_n(x^{w_1}, \ldots, x^{w_d}) \, \diff x,
\end{equation}
and a similar expression is attained for $\widehat{\nu}_n(\bw)$ but with $C_n$ replaced by $\widehat{C}_n$.
Comparing the latter equation with \eqref{eq:C2nu} yields
$$
\norm{\nu_n - \nu }_\infty \le 2 \norm{C_n - C }_\infty.
$$
Standard empirical process arguments yield uniform strong consistency of the empirical copula (\citeNP{deheuvels91}). We come to a similar inequality for $\widehat{\nu}_n$. 
Uniform strong consistency of $A_n$ and $\widehat{A}_n$ follows.

Next, consider the empirical processes
$$
\CB_n=\sqrt{n}(C_n-C), \qquad \copula_n=\sqrt{n}(\widehat{C}_n-C).
$$
Combining Equations~\eqref{eq:C2nu} and \eqref{eq:Cn2nun} we obtain
$$
\sqrt{n} \bigl(\nu_n(\bw) - \nu(\bw) \bigr)
=\frac{1}{d}\sum_{i=1}^d \int_0^1\CB_n(1,\ldots,1,x^{w_i},1,\ldots,1) \diff x
-\int_0^1\CB_n(x^{w_1},\ldots,x^{w_d})\diff x
$$
and clearly a similar expression is obtained for $\sqrt{n} \bigl( \widehat{\nu}_n(\bw) - \nu(\bw) \bigr)$
but replacing $\CB_n$ with $\copula_n$.
Now, two related results: in the space $\ell^{\infty}([0,1]^d)$ equipped with the supremum norm, 
$\CB_n\conW\CB$, as $n\rightarrow\infty$, where $\CB$ is a C-Brownian bridge, and if Condition \ref{cond:smooth} holds, then $\copula_n\conW\copula$, as $n\rightarrow\infty$, 
where $\copula$ is the Gaussian process defined in \eqref{eq:cop_proc}.
The map
\begin{equation*}
\phi : \ell^{\infty}([0,1]^d) \to \ell^{\infty}(\simp) : f \mapsto \phi(f)
\end{equation*}
defined by
$$
(\phi(f))(\bw)
= \frac{1}{d}\sum_{i=1}^d \int_0^1 f(1,\ldots,1,x^{w_i},1,\ldots,1) \, \diff x
-
\int_0^1 f(x^{w_1},\ldots,x^{w_d}) \, \diff x
$$
is linear and bounded, and therefore continuous. The continuous mapping theorem then implies
$$
\sqrt{n}(\nu_n-\nu)=\phi(\CB_n)\conW\phi(\CB),
\quad\sqrt{n}(\widehat{\nu}_n-\nu)=\phi(\copula_n)\conW\phi(\copula),\quad n\rightarrow\infty,
$$
in $\ell^{\infty}(\simp)$. The Gaussian process $\copula$ satisfies
$$
\prob\{\forall\, i=1,\ldots,d :\forall\, u \in[0,1]: \copula(1,\ldots,1,u,1,\dots,1)=0\}=1.
$$
This property follows from the continuity of its sample paths and by the form of the
covariance function \eqref{eq:covariance}. We find, for $\bw \in \simp$,
$$
(\phi(\copula))(\bw) = - \int_0^1 \copula(x^{w_1}, \ldots, x^{w_d}) \, \diff x.
$$
Finally, apply the functional delta method (\citeNP{vaart98}, Ch.\ 20) to arrive at the conclusion.
\end{proof}

\begin{proof}[Proof of Proposition~\ref{prop:conv_bapp}]
We have $|B_A(\bw;k)-A(\bw)|\leq\expect |A(\bY_k/k)-A(\bw)|$, where
$\bY_k=(Y_{k,i};i=1,\ldots,d)$ is a multinomial random vector with $k$ trials, $d$ possible outcomes, and success probabilities $w_1,\ldots,w_d$.
Any function $A \in \spA$ is Lipschitz-1, so that
$|B_A(\bw;k)-A(\bw)|\leq\sum_{i=1}^d\expect|Y_{k,i}/k-w_i|$. By the Cauchy--Schwarz inequality and the fact that the random variables $Y_{i,k}$ are binomially distributed, it follows that
$|B_A(\bw;k)-A(\bw)|\leq\sum_{i=1}^d(\expect(Y_{k,i}/k-w_i)^2)^{1/2} \le d/(2\sqrt{k})$.
%
\end{proof}
%
%
%
%
\begin{proof}[Proof of Proposition~\ref{prop:lower}]
On the one hand we have that if $B_A(\bw;k) \geq \max(w_1, \ldots, w_d)$, then $D_{\bv_i-\bv_j} B_A(\bv_j;k) \geq -1$. Indeed, $\max(w_1, \ldots, w_d)$ is the intersection of the planes $z_{0} = 1-w_1-w_2-\cdots-w_{d-1}$, $z_1=w_1$, $\ldots$, $z_{d-1}=w_{d-1}$, then by the assumption
$$
B_A(\bv_j;k) \geq z_j, \qquad j=0,1,\ldots,d-1.
$$
The directional derivatives of $B_A$ calculate for $\bv_j$, $j=0,1,\ldots,d-1,$ are equal to
\begin{equation} \label{eq:directderiv}
D_{\bv_i-\bv_j} B(\bv_j;k)=
\begin{cases}
D_{\bv_i-\bv_0} B(\bv_0;k) & \mbox{ if } i \neq 0 = j\\
- D_{\bv_j-\bv_0} B(\bv_j;k) & \mbox{ if } i = 0 \neq j\\
D_{\bv_i-\bv_0} B(\bv_j;k) - D_{\bv_j-\bv_0} B(\bv_j;k) & \mbox{ if } i\neq 0 \neq j, i\neq j\\
\end{cases}
\end{equation}
which are bounded from below by $-1$.
Then, considering the directional derivatives on both sides of the above inequality we obtain
$$
D_{\bv_i-\bv_j} B_A(\bv_j;k) \geq -1,\quad \forall\;i,j = 0,1,\ldots,d-1, \, i\neq j,
$$
and hence the result.

On the other hand if
$D_{\bv_i-\bv_j}B_A(\bv_j;k)\geq -1$, $j=0,\ldots,d-1$ then $B_A(\bw;k)\geq\max(w_1,\ldots,w_d)$. 
Since $B_A$ lies above the tangent plane
\begin{equation} \label{eq: tangentplane}
B_A(\bw;k) \geq B_A(\bw';k) + (\bw'-\bw)^\top\nabla B_A(\bw';k), \qquad \forall \, \bw,\bw' \in \simp.
\end{equation}
by the convexity assumption, then evaluating this inequality for $\bw'=\bv_j$ for $j \in \{ 0,1,\ldots,d-1\}$ we obtain the desired result $B_A(\bw;k) \geq w_j$ for all $\bw \in \simp$. Indeed, 
considering \eqref{eq: tangentplane} at $\bw'=\bv_0$ we find, for $\bw \in \simp$,
$$
B_A(\bw;k) \geq 1 + \bw^\top \nabla B_A(\bv_0;k) = 1 + \sum_{i=1}^{d-1} w_i \,  D_{\bv_i-\bv_0} B(\bv_0;k) \geq 1 + \sum_{i=1}^{d-1} w_i \, (-1) = w_d
$$
where $w_d = 1-w_1-\cdots-w_{d-1}$, as required. Furthermore, considering \eqref{eq: tangentplane} at $\bw'=\bv_j$ for $j\in \{1,\ldots,d-1\}$ we find for $\bw \in \simp$,
\begin{eqnarray} \nonumber
B_A(\bw;k) &\geq& 1 + (\bw-\bv_j)^\top \nabla B_A(\bv_j;k) \\ \nonumber
&=& 1 + (w_j-1)D_{\bv_j-\bv_0} B(\bv_j;k) + \sum_{\substack{i=1\\i\neq j}}^{d-1} w_i \,  D_{\bv_i-\bv_0} B(\bv_j;k) \\ \nonumber
&\geq& 1 + (w_j-1)D_{\bv_j-\bv_0} B(\bv_j;k) + \sum_{\substack{i=1\\i\neq j}}^{d-1} w_i \, \bigg( D_{\bv_j-\bv_0} B(\bv_j;k) -1 \bigg) \\ \nonumber
&=& 1 + (w_j-1)D_{\bv_j-\bv_0} B(\bv_j;k) + (1-w_j-w_d) \, \bigg( D_{\bv_j-\bv_0} B(\bv_j;k) -1 \bigg) \\ \nonumber
&=& w_j + w_d \, \bigg(1-D_{\bv_j-\bv_0} B(\bv_j;k) \bigg)\geq w_j \nonumber
\end{eqnarray}
given that $D_{\bv_i-\bv_0} B(\bv_j;k) \geq D_{\bv_j-\bv_0} B(\bv_j;k) -1$ and $1-D_{\bv_j-\bv_0} B(\bv_j;k) \geq 0$. 
\end{proof}
%
%
%
%
\begin{proof}[Proof of Proposition~\ref{prop:nested}]
Firstly, the polynomials in $\spA_k$ are nested (e.g., \citeANP{wang+g12}, \citeyearNP{wang+g12}; 
\citeANP{farin86}, \citeyearNP{farin86}).
By the degree-raising property we have 
$$
B(\bw;k)=\sum_{\balpha \in \Gamma_k} \beta_{\balpha} b_{\balpha}(\bw;k)=\sum_{\balpha \in \Gamma_{k+1}} \tilde{\beta}_{\balpha} b_{\balpha}(\bw;k+1)=\tilde{B}(\bw;k+1)
$$
where 
\begin{equation}
\label{eq:beta_tilde}
\tilde{\beta}_{\balpha} = \sum_{h=1}^d \frac{\alpha_{h}}{k+1}\beta_{\balpha - \bv_{h-1}}.
\end{equation}
We need to show that the coefficients $\tilde{\beta}_{\balpha}$ satisfy the constraints R1)-R2)-R3).
For the case R1) we need to check that
$$
\Delta_{i,0}^2 \tilde{\beta}_{\balpha} - \sum_{\substack{j\neq i}} |\Delta_{i,0} \Delta_{j,0} \tilde{\beta}_{\balpha}|\geq 0,\quad \forall \balpha \in \Gamma_{(k+1)-2},\;i=1,\ldots,d-1.
$$
This can be rewritten as
$$
\Delta_{i,0}^2 \tilde{\beta}_{\balpha} - \sum_{\substack{j\neq i}} (-1)^{I_{s,t}}\Delta_{i,0} \Delta_{j,0} \tilde{\beta}_{\balpha}\geq 0,
$$
where $I_{s,t}$ is the set of all the possible combinations with repetition of the set $\{1,2\}$ in sequences of $d-2$ terms,
$s=1,\ldots,d-2$ and $t=1,\ldots,2^{d-2}$.
Using the relation in \eqref{eq:beta_tilde} we have
$$
\tilde{\beta}_{\balpha} = \sum_{h=1}^d \frac{\alpha_{h}}{k+1} \beta_{\balpha-\bv_{h-1}} = \sum_{h=1}^{d-1} \frac{\alpha_{h}}{k+1} \beta_{\balpha-\be_h} + \frac{\alpha_{d}}{k+1} \beta_{\balpha}.
$$
Then, we obtain
\begin{eqnarray} \nonumber
\Delta_{i,0}^2 \tilde{\beta}_{\balpha} - \sum_{\substack{j\neq i}} (-1)^{I_{s,t}}\Delta_{i,0} \Delta_{j,0} \tilde{\beta}_{\balpha} \nonumber
&=&
\Delta_{i,0}^2 \left\{ \sum_{h=1}^{d-1} \frac{\alpha_{h}}{k+1} \beta_{\balpha-\be_h} + \frac{\alpha_{d}}{k+1} \beta_{\balpha} \right\}\\ \nonumber
&-&  \sum_{\substack{j\neq i}} (-1)^{I_{s,t}} \Delta_{i,0} \Delta_{j,0}  \left\{  \sum_{h=1}^{d-1} \frac{\alpha_{h}}{k+1} \beta_{\balpha-\be_h} + \frac{\alpha_{d}}{k+1} \beta_{\balpha} \right\}\\ \nonumber
&=&
\sum_{h=1}^{d-1} \frac{\alpha_{h}}{k+1} \left\{ \Delta_{i,0}^2 \beta_{\balpha-\be_h} - \sum_{\substack{j\neq i}} (-1)^{I_{s,t}} \Delta_{i,0} \Delta_{j,0} \beta_{\balpha-\be_h} \right\} \\ \nonumber
	&+& \sum_{h=1}^{d-1} \frac{\alpha_{d} -1}{k+1} \left\{ \Delta_{i,0}^2 \beta_{\balpha} - \sum_{\substack{j\neq i}} (-1)^{I_{s,t}} \Delta_{i,0} \Delta_{j,0} \beta_{\balpha} \right\} \geq 0,\\ \nonumber
\end{eqnarray}
and hence the result.
For the case R2), using \eqref{eq:beta_tilde}, it is immediate to verify for the set
$
\{ \tilde{\beta}_{\balpha}, \balpha \in \Gamma_{k+1}
: \balpha = \bzero \text{ or }  \balpha = (k+1) \, \be_i, \, \forall i=1,\dots, d-1 \}
$
that $\tilde{\beta}_{\balpha}=\beta_{\balpha}=1$. 
Finally, for the case R3) we need to check that
$
1 - 1/(k+1) < \tilde{\beta}_{\balpha},
$
where
$
\{ \tilde{\beta}_{\balpha}, \balpha \in \Gamma_{k+1}
: \balpha = \be_i \text{ or }  \balpha = k \be_i  
\text{ or } \balpha_l = k \be_i + \be_j, \; \forall j\neq i=1,\dots, d-1 \}.
$
By definition we have
$$
\tilde{\beta}_{\be_i} = \frac{k}{k+1}\beta_{\be_i} + \frac{1}{k+1}, \;
\tilde{\beta}_{k\be_i} = \frac{k}{k+1}\beta_{(k-1)\be_i} + \frac{1}{k+1},\;
\tilde{\beta}_{k\be_i+\be_j} = \frac{k}{k+1}\beta_{(k-1)\be_i+\be_j} + \frac{1}{k+1}.
$$
Substituting $\tilde{\beta}_{\balpha}$, with $\balpha = \be_i, \balpha = k \be_i, \balpha = k \be_i+\be_j$, in the previous inequality we obtain
\begin{eqnarray} \nonumber
\tilde{\beta}_{\balpha} &\geq& 1-\frac{1}{k+1} \\
\nonumber
\frac{1}{k+1} \left\{ 1+k\beta_{\balpha-\bv_{i-1}} \right\} &\geq & \frac{1}{k+1} \left\{ 1+k\left(1-\frac{1}{k}\right) \right\} \\ \nonumber
\beta_{\balpha-\bv_{i-1}}&\geq & 1 - \frac{1}{k}  
\end{eqnarray}
for $i=1,\ldots,d-1$, and hence the result. 
Thus the first statement is proven.\par
Secondly, let $A$ be a Pickands dependence function and consider the Bernstein polynomial
\[
  A_k( \bw ) = \sum_{\balpha \in \Gamma_k} A( \balpha / k ) \, b_{\balpha}( \bw ; k ),
\]
that is, $A_k = B_A( \,\cdot\,; k )$ as in \eqref{eq:polyrap}. Constraint R1) holds by assumption~\eqref{eq:wddA}. Since $\max(w_1, \ldots, w_d) \le A( \bw ) \le 1$ for all $\bw \in \simp$, the constraints in R2) and R3) are satisfied too. Finally, we have uniform convergence $A_k \to A$ by Proposition~\ref{prop:conv_bapp}.
\end{proof}

\begin{proof}[Proof of Proposition~\ref{prop:prop_bernproj}]
Consider the projection of the madogram estimator on the full space $\spA$ (rather than on the subspace $\spA_k$): 
\[
  \widetilde{A}_n^{\MD}
  = 
  \argmin_{B \in\spA} \norm{ \widehat{A}_n^{\MD}-B}_2.
\]
From Theorem \ref{prop:prop_multimado} it follows that $\sqrt{n}(\widehat{A}_n^{\MD}-A)\conW Z$ in $L^2(\simp)$ as $n\rightarrow\infty$ where $Z$ is a Gaussian process. Theorem~1 in \shortciteN{fil+g+s08} then implies that
\[
  \sqrt{n} ( \widetilde{A}_{n}^{\MD} - A )
  \conW
  \argmin_{Z'\in T_\spA(A)}\|Z'-Z\|_2,\quad n\rightarrow \infty.
\]
It remains to show that we can replace $\widetilde{A}_n^{\MD}$ by $\widetilde{A}_{n,k_n}^{\MD}$. It suffices to show that
\[
  \norm{ \widetilde{A}_{n,k_n}^{\MD} - \widetilde{A}_{n}^{\MD} }_2
  = o_p(n^{-1/2}), \qquad n \to \infty.
\]
By the first inequality in Lemma~1 in \shortciteN{fil+g+s08} with, in their notation, $\mathcal{F} = \mathcal{A}$ and $\mathcal{G} = \mathcal{A}_k$, we find that
\[
  \norm{ \widetilde{A}_{n,k_n}^{\MD} - \widetilde{A}_{n}^{\MD} }_2
  \le
  [ \delta_{k_n} ( 2 \norm{ \widehat{A}_{n}^{\MD} - \widetilde{A}_{n}^{\MD} }_2 + \delta_{k_n} ) ]^{1/2},
\]
where $\delta_{k_n}$ is bounded by the $L_2$ Hausdorff distance between $\spA$ and $\spA_{k_n}$. Proposition~\ref{prop:conv_bapp} yields $\delta_{k_n} = O(k_n^{-1/2})$, which is $o(n^{-1/2})$ by the assumption on $k_n$. Furthermore, since $A \in \spA$, we find, by definition of the projection estimator,
\[
  \norm{ \widehat{A}_{n}^{\MD} - \widetilde{A}_{n}^{\MD} }_2
  \le
  \norm{ \widehat{A}_{n}^{\MD} - A }_2
  =
  O_p(n^{-1/2}), \qquad n \to \infty.
\]
Combine these relations to complete the proof.
\end{proof}

\bibliographystyle{chicago}
\bibliography{bernpoly}
\end{document}